\documentclass[oneside,reqno,english]{amsart}
\pdfoutput=1
\usepackage{lmodern}
\usepackage{berasans}
\usepackage{beramono}

\usepackage[T1]{fontenc}
\usepackage[latin9]{inputenc}
\pagestyle{headings}
\setcounter{secnumdepth}{1}
\setcounter{tocdepth}{2}
\synctex=-1
\usepackage{xcolor}
\usepackage{babel}
\usepackage{prettyref}
\usepackage{units}
\usepackage{mathrsfs}
\usepackage{enumitem}
\usepackage{amscd}
\usepackage{changebar}
\usepackage{amsbsy}
\usepackage{amstext}
\usepackage{amsthm}
\usepackage{amssymb}
\usepackage{graphicx}
\usepackage{setspace}
\PassOptionsToPackage{normalem}{ulem}
\usepackage{ulem}
\usepackage{microtype}
\usepackage[pdfusetitle,
 bookmarks=true,bookmarksnumbered=false,bookmarksopen=false,
 breaklinks=false,pdfborder={0 0 1},backref=false,colorlinks=true]
 {hyperref}
\hypersetup{
 citecolor=blue,linkcolor=magenta,anchorcolor=green}

\makeatletter

\providecolor{lyxadded}{rgb}{0.098,0.29,0.6}
\providecolor{lyxdeleted}{rgb}{1,0,0}
\DeclareRobustCommand{\mklyxadded}[1]{\textcolor{lyxadded}\bgroup#1\egroup}
\DeclareRobustCommand{\mklyxdeleted}[1]{\textcolor{lyxdeleted}\bgroup\mklyxsout{#1}\egroup}
\DeclareRobustCommand{\mklyxsout}[1]{\ifx\\#1\else\sout{#1}\fi}

\numberwithin{equation}{section}
\theoremstyle{plain}
\newtheorem{thm}{\protect\theoremname}
\theoremstyle{plain}
\newtheorem{lem}{\protect\lemmaname}

\@ifundefined{date}{}{\date{}}
\usepackage{upgreek}
\usepackage{urwchancal}
\usepackage{mathtools}
\usepackage{tikz-cd}
\usepackage{gensymb} 
\AtBeginDocument{%
  \setlength\abovedisplayskip{3pt plus 1pt minus 1pt}
  \setlength\belowdisplayskip{4pt plus 1pt minus 1pt}}
\newrefformat{cor}{Corollary \ref{#1}}
\newrefformat{prop}{Proposition \ref{#1}}
\newrefformat{conj}{Conjecture \ref{#1}}
\newrefformat{nthm}{Theorem \nameref{#1}}
\newrefformat{def}{Definition \ref{#1}}
\newrefformat{eq}{Eq.(\ref{#1})}
\newrefformat{rem}{Remark \ref{#1}}

\makeatother

\providecommand{\lemmaname}{Lemma}
\providecommand{\theoremname}{Theorem}

\begin{document}
\global\long\def\set#1#2{\left\{  #1\, |\, #2\right\}  }%

\global\long\def\cyc#1{\mathbb{Q}\!\left[\zeta_{#1}\right]}%

\global\long\def\mat#1#2#3#4{\left(\begin{array}{cc}
#1 & #2\\
#3 & #4
\end{array}\right)}%

\global\long\def\Mod#1#2#3{#1\equiv#2\, \left(\mathrm{mod}\, \, #3\right)}%

\global\long\def\inv{^{\,\textrm{-}1}}%

\global\long\def\pd#1{#1^{+}}%

\global\long\def\sym#1{\mathbb{S}_{#1}}%

\global\long\def\fix#1{\mathtt{Fix}\!\left(#1\right)}%

\global\long\def\map#1#2#3{#1\!:\!#2\!\rightarrow\!#3}%

\global\long\def\Map#1#2#3#4#5{\begin{split}#1:#2  &  \rightarrow#3\\
 #4  &  \mapsto#5 
\end{split}
 }%

\global\long\def\fact#1#2{#1\slash#2}%

\global\long\def\Gal#1{\mathtt{Gal}\!\left(#1\right)}%

\global\long\def\fixf#1{\mathbb{Q}\!\left(#1\right)}%

\global\long\def\gl#1#2{\mathsf{GL}_{#2}\!\left(#1\right)}%

\global\long\def\SL{\mathrm{SL}_{2}\!\left(\mathbb{Z}\right)}%

\global\long\def\zn#1{\left(\mathbb{Z}/\!#1\mathbb{Z}\right)^{\times}}%

\global\long\def\sn#1{\mathbb{S}_{#1}}%

\global\long\def\aut#1{\mathrm{Aut\mathit{\left(#1\right)}}}%

\global\long\def\FA#1{\vert#1\vert}%

\global\long\def\FB#1{\mathtt{Z}^{2}\!(#1)}%

\global\long\def\FC#1{#1^{{\scriptscriptstyle \flat}}}%

\global\long\def\FD#1{#1^{{\scriptscriptstyle \times}}}%

\global\long\def\FF#1#2{\mathrm{Fix}_{#1}\left(#2\right)}%

\global\long\def\FI#1{#1_{{\scriptscriptstyle \pm}}}%

\global\long\def\cl#1{\mathscr{C}\negthinspace\ell\!\left(#1\right)}%

\global\long\def\bl#1{\mathcal{B}\ell\!\left(#1\right)}%

\global\long\def\cft{\mathscr{C}}%

\global\long\def\gal#1{\upsigma_{#1}}%

\global\long\def\galpi#1{\boldsymbol{\uppi}{}_{#1}}%

\global\long\def\ann#1{\ker\mathfrak{#1}}%

\global\long\def\cent#1#2{\mathtt{C}_{#1}\!\left(#2\right)}%

\global\long\def\ch#1{\boldsymbol{\uprho}_{#1}}%

\global\long\def\qd#1{\mathtt{d}_{#1}}%

\global\long\def\cw#1{\mathtt{h}_{#1}}%

\global\long\def\tcl{\textrm{trivial class}}%

\global\long\def\ccl{\mathtt{C}}%

\global\long\def\zcl{\mathtt{z}}%

\global\long\def\om#1{\omega\!\left(#1\right)}%

\global\long\def\cs#1{\boldsymbol{\upmu}(\mathfrak{\mathfrak{#1}})}%

\global\long\def\irr#1{\mathtt{Irr}\!\left(#1\right)}%

\global\long\def\fc{\textrm{FC set}}%

\global\long\def\v{\mathtt{{\scriptstyle 0}}}%

\global\long\def\fm{\mathtt{N}}%

\global\long\def\du#1{\mathfrak{#1}^{{\scriptscriptstyle \perp}}}%

\global\long\def\lat{\mathcal{\mathscr{L}}}%

\global\long\def\zent#1{\mathtt{Z}(\mathfrak{#1})}%

\global\long\def\cech#1#2{\boldsymbol{\varpi}_{#1}\!\left(#2\right)}%

\global\long\def\idch{\boldsymbol{\mathfrak{1}}}%

\global\long\def\usub#1{\mathbf{\boldsymbol{\cup}}#1}%

\global\long\def\zquot#1#2{\mathfrak{#1}/\!#2}%

\global\long\def\extclass#1{#1\ccl}%

\global\long\def\dg#1{\hat{#1}}%

\global\long\def\cov#1{\mathfrak{#1}^{\intercal}}%

\global\long\def\vfc{\mathfrak{o}}%

\global\long\def\sc{\mathcal{\cov{\vfc}}}%

\global\long\def\res#1#2{\mathtt{res}_{#2}^{#1}}%

\global\long\def\clos#1{\langle#1\rangle}%

\global\long\def\mzq#1{\boldsymbol{\partial}\mathfrak{#1}}%

\global\long\def\ZD#1{\du{\left(\mzq{\mathfrak{#1}}\right)}}%

\global\long\def\ZR#1#2{\mathtt{Z}(\mathfrak{#1}\vert\mathfrak{#2})}%

\global\long\def\IR#1#2{\mathtt{I}\!\left(\mathfrak{#1}\vert\mathfrak{#2}\right)}%

\global\long\def\ext#1#2{\mathfrak{#1}\!\propto\!\mathfrak{#2}}%

\global\long\def\mzn#1#2{\boldsymbol{\partial}^{\mathtt{#1}}\mathfrak{#2}}%

\global\long\def\cmp#1{\textrm{(cf. #1 of [2])}}%

\global\long\def\ma{\mathrm{mass}}%
\global\long\def\fcs{\textrm{FC set}}%

\global\long\def\ke#1{\mathtt{ker}(#1)}%
\global\long\def\im#1{\mathtt{im}(#1)}%
\global\long\def\coker#1{\mathtt{coker}(#1)}%

\title{Exact sequences and the combinatorics of conformal models}
\author{P. Bantay\\
Institute for Theoretical Physics, E\"{o}tv\"{o}s Lor{\' a}nd University}
\begin{abstract}
We investigate the mutual relations between the centers of different
elements in the deconstruction lattice of a 2D conformal model, and
show how these can be described using exact sequences of abelian groups.
In particular, we exhibit a long exact sequence connecting the centers
of higher central quotients. 
\end{abstract}

\maketitle

\section{Introduction}

It has been understood for a long time that a substantial amount
of information about a rational conformal model \cite{BPZ,DiFrancesco-Mathieu-Senechal}
is encoded in its fusion algebra, which describes the possible couplings
between primary fields (i.e. which primaries, together with their
conformal descendants, could appear in the operator product of any
two given primaries). A famous result in this direction, instigating
important mathematical developments over the years, is the celebrated
formula of Verlinde \cite{Verlinde1988} relating the fusion rules
of the model, i.e. the structure constants of the fusion algebra,
to the matrix describing the transformation properties of the chiral
characters under the modular transformation $\map S{\tau}{\nicefrac{-1}{{\textstyle \tau}}}$,
and which allows to reconstruct the latter from the knowledge of the
fusion rules and the conformal weights of the primaries. But it is
fair to say that this is just the tip of the iceberg, for several
similar relations are known, e.g. for the Frobenius-Schur indicators
of the primaries \cite{Bantay1997a}, or for the traces of finite
order mapping classes \cite{Bantay2003c}. 

As has been discussed in \cite{Bantay2020a,Bantay2021}, an interesting
structure related to the fusion algebra results from the consideration
of those collections of primaries that, besides containing the vacuum,
are closed under the fusion product. These may be shown to form a
lattice with many nice properties, like being self-dual and modular
\cite{Crawley1973,Gratzer2011}. Some elements of this so-called deconstruction
lattice (the local ones) may be identified with the representation
rings of suitable finite groups, allowing the transfer to them of
several standard notions from group theory (like commutativity, nilpotency,
 etc.) which turn out to make perfectly good sense for generic elements. 

In particular, there is a way to define the notions of center and
that of central quotients/extensions \cite{Bantay2020a}, which correspond,
in case of local elements, to the standard group theoretic notions.
The importance of all this stems from the fact that there is a well
understood relationship between properties of central quotients and
extensions, and this can facilitate greatly the analysis of specific
models. A notable exception to this, in complete analogy with the
case of groups, concerns the structure of the center itself, as there
is no obvious connection between the centers of central quotients/extensions.
The aim of the present note is to show that a useful characterization
of this relationship can be nevertheless given in terms of exact sequences
of abelian groups \cite{Robinson}. 

In order to be accessible to a wider readership, we briefly summarize
background material on the deconstruction lattice in \prettyref{sec:The-deconstruction-lattice},
on central quotients and extensions in \prettyref{sec:Central-quotients-and},
and on exact sequences in \prettyref{sec:Exact-sequences}. Then we
turn to our main subject, and study in \prettyref{sec:basics} the
restriction homomorphism that connects the centers of different elements
in the deconstruction lattice. The most important results can be found
in \prettyref{sec:Galois}, which investigates the Galois correspondence
between central quotients and subgroups of the center, resulting in
exact sequences able to describe the subtleties of this relationship.
The theme of \prettyref{sec:Long-exact-sequences} is a long exact
sequence connecting the centers of higher quotients that proves useful
in actual computations, while \prettyref{sec:Extensions-vs-quotients}
is concerned with the case of central extensions instead of quotients,
based on the dual nature of these two notions. In the last section
we summarize the results, and comment on open questions and possible
future developments. Finally, as the modularity of the deconstruction
lattice plays a pivotal role in some of the results of \prettyref{sec:Galois},
a streamlined proof of this fundamental result is presented in the
\nameref{sec:Appendix}, based on an explicit characterization of
the join operation that could prove interesting in itself.

\section{The deconstruction lattice\label{sec:The-deconstruction-lattice}}

Let's consider a 2D rational conformal model \cite{BPZ,DiFrancesco-Mathieu-Senechal}
with a finite number of primaries. We shall denote by $N_{pq}^{r}$
the fusion rule coefficients, that is the multiplicity of a primary
$r$ in the fusion product of the primaries $p$ and $q$. The collection
of all those subsets of primaries that contain the vacuum and are
closed under the fusion product (meaning that if $N_{pq}^{r}\!>\!0$
with both $p$ and $q$ belonging to it, then $r$ does also belongs
to it) may be shown \cite{Bantay2020a} to form a modular lattice
$\lat$, termed the \emph{deconstruction lattice} because of its fundamental
role in the classification of the different orbifold deconstructions
of the model \cite{Bantay2019a,Bantay2020}. The ordering in $\lat$
is simply set inclusion, and the meet operation is set intersection
(the join operation being less obvious, but see the \nameref{sec:Appendix}).

Recall \cite{Bantay2020a} that to each $\mathfrak{g\!\in\!\lat}$
one can assign a partition of the primaries of the model into so-called
$\mathfrak{g}$-classes characterized by the fact that the irreducible
representations of the fusion algebra corresponding to different elements
in the same class coincide when restricted to the elements of $\mathfrak{g}$.
Of utmost importance is the $\mathfrak{g}$-class that contains the
vacuum, the\emph{ trivial class} $\du g$, which may be shown to be
itself an element of $\lat$. It is straightforward that $\du g\!\subseteq\!\du h$
whenever $\mathfrak{h}\!\subseteq\!\mathfrak{g}$, and that the trivial
class of $\du g$ is $\mathfrak{g}$ itself, hence the lattice $\lat$
is \emph{self-dual}, i.e. endowed with an order-reversing and involutive
\emph{duality map} $\mathfrak{g}\!\mapsto\!\du g$ that relates the
join and meet operations via $\du{\left(\mathfrak{g}\!\vee\!\mathfrak{h}\right)}\!=\!\du g\!\cap\du h$.

As it turns out, self-dual lattices are closely related to undirected
graphs (with possible loops): any such graph determines a self-dual
lattice, and any self-dual lattice comes from a suitable graph\footnote{Note that quite different graphs may lead to the same lattice, but
the collection of all such graphs may be characterized in a simple
manner \cite{Bantay2022}.}. A fairly non-intuitive result is that the deconstruction lattice
of a conformal model corresponds to its locality graph, i.e. the graph
whose vertices are the primary fields, with two of them adjacent if
they are mutually local, i.e. if their OPE is single-valued. Actually,
instead of the locality graph one can use the so-called locality diagram,
whose vertices correspond to equilocality classes of primaries, i.e.
collections of primaries that are mutually local with the same set
of primaries, providing a nice graphical representation of the deconstruction
lattice \cite{Bantay2021a,Bantay2022}.

Especially important are those elements $\mathfrak{g\!\in\!\lat}$,
termed \emph{local} ones, for which $\mathfrak{g}\!\subseteq\!\du g$,
since these provide the input data for the orbifold deconstruction
procedure \cite{Bantay2019a,Bantay2020}, and correspond to the different
orbifold realizations of the given model. In particular, for each
local $\mathfrak{g\!\in\!\lat}$ there exists a finite group, the
twist group of the corresponding orbifold, whose representation ring
(viewed as a $\lambda$-ring, i.e. taking into account the different
possible symmetrizations of tensor powers) is isomorphic\footnote{Note that this does not fix the group uniquely, because of the existence
of so-called Brauer-tuples, i.e. non-isomorphic groups with identical
character tables and power maps \cite{Lux-Pahlings}. But such exceptions
are rather sparse, and the resulting ambiguity can be handled.} with $\mathfrak{g}$. What is more, if $\mathfrak{g\!\in\!\lat}$
is local and $\mathfrak{h\!\in\!\lat}$ is contained in it, then $\mathfrak{h}$
is local too, and the group corresponding to $\mathfrak{h}$ is a
homomorphic image (i.e. factor group) of the one corresponding to
$\mathfrak{g}$. An important feature of local $\mathfrak{g\!\in\!\lat}$
is that all of their elements, besides having (rational) integer quantum
dimension, have either integer or half-integer conformal weight.

Let us note that in case of abelian models, when all the primaries
are simple currents (primaries of quantum dimension 1) \cite{SY1,Intriligator,SY2},
the fusion closed sets are nothing but the different subgroups of
the group of simple currents, hence the deconstruction lattice is
isomorphic to the subgroup lattice of the latter, with the duality
map being determined by the relation of mutual locality, which is
in turn determined by the distribution of conformal weights. In such
a case a simple current extension by integer spin simple currents
\cite{Fuchs1996a} has the same result as the corresponding orbifold
deconstruction of the model\footnote{More generally, simple current extensions of arbitrary conformal models
can be understood as orbifold deconstructions for $\mathfrak{g}\!\in\!\lat$
consisting purely of integer spin simple currents.}.

It should transpire from the above that the deconstruction lattice
$\lat$ is an important algebraic/combinatorial invariant of the
conformal model under study. In particular, it gives a concise description
of the different realizations of the given model as an orbifold of
another one. Understanding the structure of $\lat$ through the relation
of its elements to each other could provide information about the
model that would be hard to obtain by other means. One such piece
of information is related to the notion of central quotients and extensions,
to which we now turn.

\section{Central quotients and extensions\label{sec:Central-quotients-and}}

Given $\mathfrak{g}\!\in\!\lat$, an important numerical characteristic
of any $\mathfrak{g}$-class $\mathcal{\ccl}$ is its \emph{mass}
\begin{equation}
\cs{\ccl}\!=\!\sum_{p\in\ccl}\qd p^{2}\label{eq:massdef}
\end{equation}
where $\qd p$ denotes the quantum dimension of the primary $p$.
An especially important role is played by those $\mathfrak{g}$-classes
whose $\ma$ is minimal (equal to that of the trivial class $\du g$),
the \emph{central classes}, which may be shown \cite{Bantay2020a}
to form an abelian group $\zent{\mathfrak{g}}$, the \emph{center}
of $\mathfrak{g}$, that permutes the collection of all $\mathfrak{g}$-classes,
and whose identity element is the trivial class. For local $\mathfrak{g}\!\in\!\lat$
corresponding to the finite group $G$, the center $\zent{\mathfrak{g}}$
is isomorphic with the center of $G$. To each central class $\zcl\!\in\!\zent g$
one may associate a complex-valued function $\map{\varpi_{\zcl}}{\mathfrak{g}}{\mathbb{C}}$,
its \emph{central characte}r, such that for $\alpha,\beta\!\in\!\mathfrak{g}$
one has $\varpi_{\zcl}\!\left(\gamma\right)\!=\!\varpi_{\zcl}\!\left(\alpha\right)\varpi_{\zcl}\!\left(\beta\right)$
whenever $N_{\alpha\beta}^{\gamma}\!>\!0$, and one may show that
the map that assigns to each central class its central character provides
an explicit isomorphism between $\zent g$ and its Pontryagin dual.

It turns out \cite{Bantay2020a} that each subgroup $H\!<\!\zent g$
determines another element of $\lat$, the \emph{central quotient}
$\zquot g{\mathit{H}}$, and there is a one-to-one correspondence
between elements $\mathfrak{h}\!\in\!\lat$ that satisfy $\zquot g{\mathit{H}}\!\subseteq\!\mathfrak{h}\!\subseteq\!\mathfrak{g}$
and subgroups of $H$. In particular, each $\mathfrak{g\!\in\!\lat}$
has a \emph{maximal central quotient}
\begin{equation}
\mzq{\mathfrak{g}}=\zquot g{\mathit{~\zent{\mathfrak{g}}}}\label{eq:mzqdef}
\end{equation}
that is contained in all central quotients, and each $\mathfrak{h}\!\in\!\lat$
that satisfies $\mzq{\mathfrak{g}}\!\subseteq\!\mathfrak{h}\!\subseteq\!\mathfrak{g}$
is a central quotient of $\mathfrak{g}$, i.e. $\mathfrak{h}\!=\!\zquot g{\mathit{H}}$
for some subgroup $H\mathfrak{<}\zent g$. Actually, the relevant
subgroup reads $H\!=\!\set{\zcl\!\in\!\zent g}{\zcl\!\subseteq\!\du h}$,
and this implies that the union of the central classes contained in
$H$ (viewed as collections of primaries) equals $\du{\mathfrak{h}}$,
i.e. $\usub H\!=\!\du h$. In particular,
\begin{equation}
\usub{\zent g\!=\!\du{\left(\mzq g\right)}}\label{eq:usubzent}
\end{equation}

The other way round, a \emph{central extension} of $\mathfrak{g\!\in\!\lat}$
is an element $\mathfrak{h}\!\in\!\lat$ of which $\mathfrak{g}$
is a central quotient. Using the duality map $\mathfrak{g}\!\mapsto\!\du g$
of $\lat$, one can show \cite{Bantay2020a} that there is a one-to-one
correspondence between central extensions of $\mathfrak{g}$ and subgroups
of $\zent{\du g}$. In particular, there is a \emph{maximal central
extension} $\cov{\mathfrak{g}}\!=\!\du{\left(\mzq{\left(\du{\mathfrak{g}}\right)}\right)}$
that contains all central extensions of $\mathfrak{g}$, and each
$\mathfrak{h}\!\in\!\lat$ that satisfies $\mathfrak{g}\!\subseteq\!\mathfrak{h}\!\subseteq\!\cov{\mathfrak{g}}$
is a central extension of $\mathfrak{g}$, i.e. $\mathfrak{g}$ is
a central quotient of $\mathfrak{h}$; put another way, the inclusion
$\mzq{\mathfrak{g}}\!\subseteq\!\mathfrak{h}\!\subseteq\!\mathfrak{g}$
is equivalent to $\mathfrak{h}\!\subseteq\!\mathfrak{g}\!\subseteq\!\cov{\mathfrak{h}}$.

The map that assigns to an element $\mathfrak{g}\!\in\!\lat$ its
maximal central quotient (resp. extension) $\mzq g$ (resp. $\cov g$)
is order-preserving, i.e. $\mzq h\!\subseteq\!\mzq g$ (resp. $\cov h\!\subseteq\!\cov g$)
whenever $\mathfrak{h}\!\subseteq\!\mathfrak{g}$. What is more interesting
is the fact that one has
\begin{equation}
\mzq{\left(\cov g\right)\!\subseteq\!\mathfrak{g}\!\subseteq\!\cov{\left(\mzq g\right)}}\label{eq:ineq1}
\end{equation}
for every $\mathfrak{g}\!\in\!\lat$. Indeed, since $\mathfrak{g}$
is a central quotient of $\cov g$, it should contain the maximal
central quotient of the latter, and in the same vein, since $\mzq{\mathfrak{g}}$
is a central quotient of $\mathfrak{g}$, the latter is a central
extension of $\mzq{\mathfrak{g}}$, consequently it has to be contained
in $\cov{\left(\mzq g\right)}$. Actually, one may show that the map
$\mathfrak{g}\!\mapsto\!\cov{\left(\mzq g\right)}$ is a closure operation
on $\lat$, whose fixed points are those elements of $\lat$ that
are themselves maximal central extensions, i.e. of the form $\cov g$
for some $\mathfrak{g}\!\in\!\lat$. 

The importance of central quotients and extensions stems from the
fact that their properties are determined to a large extent by group
theory. For example, the $\ma$ of the quotient $\zquot g{\mathit{H}}$
is $\FA H$ times that of $\mathfrak{g}$,
\begin{equation}
\cs{\zquot g{\mathit{H}}}\!=\!\FA H\cs g\label{eq:zqmass}
\end{equation}
and one has a fairly good description of the classes of the quotient
$\zquot g{\mathit{H}}$ in terms of the classes of $\mathfrak{g}$
and the action of $H\!<\!\zent{\mathfrak{g}}$ on them \cite{Bantay2020a}.
Based on this, one expects that the center of $\mathfrak{g\!\in\!\lat}$
and that of its quotients $\zquot g{\mathit{H}}$ are related somehow.
As we shall show in the sequel, such a relation does indeed exist,
but to describe it neatly we shall need the machinery of exact sequences
\cite{Robinson}.

As an illustration of the above, consider the minimal $N\!=\!2$ superconformal
model of central charge $c\!=\!2$. The Hasse diagram of its deconstruction
lattice is depicted in \prettyref{fig:sv2_4}, each element being
labeled by (the isomorphism type of) its center. Inspecting the figure,
one recognizes that it is made up of two sublattices related by the
duality map (which in this case is a flip, i.e. a $180\degree$ rotation
around the center of the diagram), each one isomorphic with the subgroup
lattice of the abelian group $\mathbb{Z}_{12}\!\times\!\mathbb{Z}_{2}$,
the latter being nothing but the group of simple currents\footnote{It is a general fact that the center of the maximal element of the
deconstruction lattice is isomorphic with the group of simple currents.}.

But the picture is usually much more complicated, as exemplified by
the $\mathbb{Z}_{2}$ orbifold of the compactified boson at radius
$R\!=\!6$. The Hasse diagram of its deconstruction lattice is depicted
in \prettyref{fig:at36}; once again, the duality map is the flip
around the center of the diagram, and the nodes are labeled by their
centers. While parts of the diagram mimic the subgroup lattice of
some abelian group, the overall pattern is clearly more complicated
than in the previous example. It was the desire to understand the
structure underlying these patterns that motivated the present work.
\begin{figure}
~~~~\includegraphics[scale=0.52]{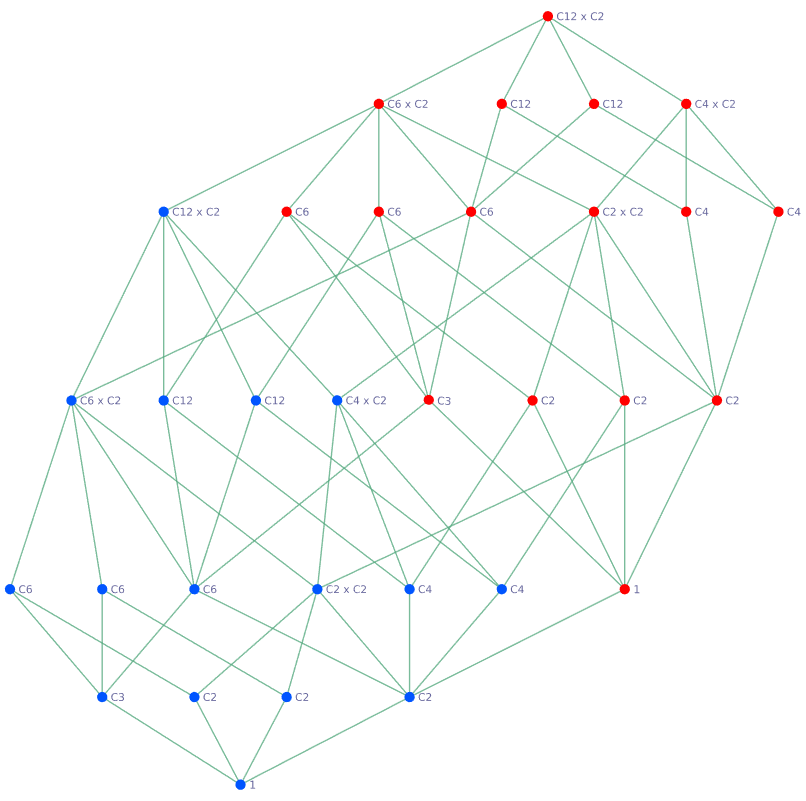}

\caption{\label{fig:sv2_4}Deconstruction lattice of the minimal $N\!=\!2$
superconformal model of central charge $c\!=\!2$, the nodes being
labeled by their centers (with $\mathtt{C}n$ denoting a cyclic group
of order $n$).}
\end{figure}

\begin{figure}[t]
\centering
\includegraphics[scale=0.32]{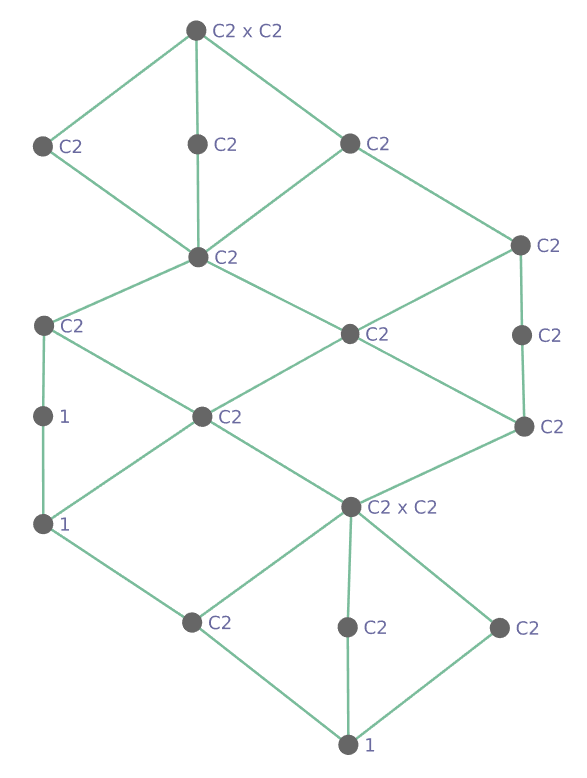}

\caption{\label{fig:at36}Deconstruction lattice of the $\mathbb{Z}_{2}$ orbifold
of the compactified boson at radius $R\!=\!6$, the nodes being labeled
by their centers.}
\end{figure}

Finally, let's remark that there is no reason to stop at the maximal
central quotient $\mzq g$, for one can also consider the maximal
central quotient of the latter, and so on repeatedly. Defining $\mzn{\mathit{k}}g$
for $\mathfrak{g}\!\in\!\lat$ and a positive integer $k$ via $\mzn 1g\!=\!\mzq{\mathfrak{g}}$
and the recursion $\mzn{\mathit{k}+1}g\!=\!\mzq{(\mzn{\mathit{k}}g)}$,
we get a descending chain
\begin{equation}
\mathfrak{g}\!\supseteq\!\mzn 1g\!\supseteq\cdots\!\supseteq\!\mzn{\mathit{n}}g\label{eq:ucs}
\end{equation}
of elements of $\lat$, an analogue of the upper central series from
group theory \cite{Robinson}. In case the lattice $\lat$ is finite,
the inclusions in \prettyref{eq:ucs} imply that the sequence stabilizes
after finitely many steps, i.e. there exists some integer $N$ such
that $\mzn{\mathit{N}-1}g\!\neq\!\mzn{\mathit{N}}g\!=\!\mzn{\mathit{N}+1}g$.
We shall denote by $\mathrm{\mzn{\mathit{\infty}}g}$ this last term
of the upper central series \prettyref{eq:ucs}, and call it the \emph{hypercentral
quotient}\footnote{This terminology stems from the fact that for local $\mathfrak{g}\!\in\!\lat$
that corresponds to the finite group $G$, the hypercentral quotient
$\mathrm{\mzn{\mathit{\infty}}g}$ is itself local, and corresponds
to the factor group of $G$ by its hypercenter \cite{Robinson}.} of $\mathfrak{g}$. Note that, because $\mathrm{\mzn{\mathit{\infty}}g}$
equals its own maximal central quotient by definition, it has trivial
center, $\zent{\mathrm{\mzn{\mathit{\infty}}g}}\!=\!1$, and this
means that any element of $\lat$ may be obtained by repeated central
extensions from one with trivial center. As the structure of central
extensions is well under control, this helps to reduce the study of
subtler aspects of $\lat$ to that of its centerless elements.

\section{Exact sequences\label{sec:Exact-sequences}}

As already alluded to previously, our goal is to understand the relations
between the centers of different elements of the deconstruction lattice,
in particular those that are central quotients or extensions of each
other, and a most efficient way to express these connections is through
the use of exact sequences. Since the latter are not part of the everyday
toolkit of physicists, let's briefly summarize those basics facts
about them that we shall need in what follows.

Recall \cite{Robinson} that an \emph{exact sequence} of (abelian)
groups is a sequence of group homomorphisms $\map{\phi_{i}}{A_{i-1}}{A_{i}}$
for $i\!=\!1,\ldots,n$, such that the kernel of $\phi_{i+1}$ equals
the image of $\phi_{i}$ for $1\!\leq\!i\!<\!n$; note that the domain
of $\phi_{i+1}$ equals the range of $\phi_{i}$. Their usual notation
is
\begin{equation}
\begin{CD}A_{0}@>\phi_{1}>>A_{1}@>\phi_{2}>>\cdots@>\phi_{n-1}>>A_{n-1}@>\phi_{n}>>A_{n}\end{CD}\label{eq:sq1}
\end{equation}
 In many cases one is not really interested in the actual homomorphisms
that connect the $A_{i}$, but only in the relationship between them
that follows from the existence of the exact sequence; in such cases
one usually drops any explicit reference to the $\phi_{i}$ (unless
the specification of some of them provides useful extra information).

The most ubiquitous are the so-called \emph{short exact sequence}s
\begin{equation}
\begin{CD}1@>>>A@>>>B@>>>C@>>>1\end{CD}\label{eq:sq2}
\end{equation}
where $1$ denotes the (isomorphism class of the) trivial group of
order 1. The homomorphism theorem \cite{Alperin-Bell,Robinson} tells
us at once that the existence of such an exact sequence is tantamount
to the existence of a subgroup $K\!<\!B$, the kernel of the homomorphism
from $B$ to $C$, which satisfies $K\!\cong\!A$ and $C\!\cong\!\fact{B\!}{\!K}$.

Next in line come the four-term exact sequences
\begin{equation}
\begin{CD}1@>>>A@>>>B@>>>C@>>>D@>>>1\end{CD}\label{eq:sq3}
\end{equation}
which play an important role e.g. in algebraic number theory \cite{Janusz},
and whose existence tells us that there exists a homomorphism $\map{\phi}BC$
whose kernel is isomorphic with $A$, and whose cokernel $\mathtt{coker}\!\left(\phi\right)\!=\!\fact C{\mathtt{im}}\left(\phi\right)$
is isomorphic with $D$ (note that, since every subgroup of an abelian
group is normal, the cokernel makes perfectly good sense for homomorphism
of abelian groups). In particular, it follows from Lagrange's theorem
\cite{Robinson} that the orders of the groups appearing in a four-term
exact sequence are related by $\FA A\FA C\!=\!\FA B\FA D$.

Finally, there are the l\emph{ong exact sequence}s
\begin{equation}
\begin{CD}1@>>>A_{1}@>>>A_{2}@>>>\cdots@>>>A_{n}@>>>1\end{CD}\label{eq:sq4}
\end{equation}
whose existence signals a more subtle and less immediate relation
between the $A_{i}$. Note that long exact sequences are equivalent
to a suitable collection of overlapping short exact sequences. To
see how this comes about, let's consider the exact sequences 
\begin{equation}
\begin{CD}\,1@>>>A_{1}@>>>\cdots@>\phi_{n-1}>>A_{n}@>\psi>>B@>>>1\end{CD}\label{eq:sq5}
\end{equation}
and
\begin{equation}
\minCDarrowwidth25pt\begin{CD}1@>>>B@>\iota>>A_{n+1}@>\phi_{n+1}>>\cdots@>>>A_{n+m}@>>>1\end{CD}\label{eq:sq6}
\end{equation}
with $B\neq1$ (should $B$ be trivial, the above sequences would
shorten and would not have a common anchor). Then, since $\psi$ is
surjective with $\ke{\psi}\!=\!\im{\phi_{n-1}}$, while $\iota$ is
injective with $\im{\iota}\!=\!\ke{\phi_{n+1}}$, the composite $\phi_{n}\!=\!\iota\!\circ\!\psi$
satisfies $\ke{\phi_{n}}\!=\!\ke{\psi}\!=\!\im{\phi_{n-1}}$ and $\im{\phi_{n}}\!=\!\im{\iota}\!=\!\ke{\phi_{n+1}}$,
hence the sequence
\begin{center}
\begin{tikzcd}
1\arrow[r]&A_1\arrow[d,phantom,""{coordinate,name=Z}]\arrow{r}&\cdots\arrow{r}&A_{n-1}\arrow{r}{\phi_{n-1}}&A_n
\arrow["\hspace{0.2 cm}\phi_n\hspace{2.5 cm}" swap,dlll,rounded corners,to path={--([xshift=3ex]\tikztostart.east)|-(Z)[near end]\tikztonodes-|([xshift=-4ex]\tikztotarget.west)--(\tikztotarget)}]\\ &A_{n+1}\arrow{r}{\phi_{n+1}}&\cdots\arrow{r}&A_{n+m}\arrow[r]&1\hspace{1 cm}
\end{tikzcd}
\par\end{center}

\begin{flushleft}
is exact. This trick allows us to combine exact sequences into longer
ones, or to decompose a long exact sequence into a collection of short
exact sequences. We shall encounter the application of these ideas
in the later sections.
\par\end{flushleft}

\section{The restriction homomorphism\label{sec:basics}}

\global\long\def\rex#1#2{\mathfrak{d}_{\mathfrak{{\scriptscriptstyle #1}}}^{\mathfrak{{\scriptscriptstyle #2}}}}%
\global\long\def\ex#1{#1^{\mathfrak{h}}}%

After all these preliminaries, it is time to focus on our basic problem,
namely the relation between the center of an element $\mathfrak{g}\!\in\!\lat$
and the centers of its different central quotients/extensions. As
it turns out, it is worth to consider the more general problem of
relating the center of $\mathfrak{g}\!\in\!\lat$ to that of any $\mathfrak{h}\!\in\!\lat$
contained in it. That this approach is meaningful is based on the
following two results from \cite{Bantay2020a}:
\begin{enumerate}
\item if $\mathfrak{h}\!\subseteq\!\mathfrak{g}$, then any $\mathfrak{h}$-class
is a union of $\mathfrak{g}$-classes;
\item an $\mathfrak{h}$-class that contains a central $\mathfrak{g}$-class
is itself central.
\end{enumerate}
It follows that for any $\mathfrak{g},\mathfrak{h}\!\in\!\lat$ with
$\mathfrak{h}\!\subseteq\!\mathfrak{g}$ there is a unique central
$\mathfrak{h}$-class that contains a given central $\mathfrak{g}$-class
$\zcl\!\in\!\zent g$. 
\begin{thm}
\label{thm:resker}For $\mathfrak{g},\mathfrak{h}\!\in\!\lat$ with
$\mathfrak{h}\!\subseteq\!\mathfrak{g}$, the map that assigns to
each central $\mathfrak{g}$-class the unique $\mathfrak{h}$-class
that contains it is a homomorphism $\map{\rex gh}{\zent g}{\zent h}$,
with\textup{\emph{ image}}\textup{ $\IR gh\!=\!\set{\zcl\!\in\!\zent h\!}{\zcl\!\cap\!\ZD g\!\neq\emptyset}$}\textup{\emph{
and}} kernel $\ZR gh\!=\!\set{\zcl\!\in\!\zent g}{\zcl\!\subseteq\!\du h}$\textup{. }
\end{thm}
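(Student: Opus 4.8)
The plan is to verify the three assertions in turn --- that $\rex gh$ is a group homomorphism, that $\ke{\rex gh}=\ZR gh$, and that $\im{\rex gh}=\IR gh$ --- using only the description of the group $\zent g$ recalled from \cite{Bantay2020a} together with the two facts quoted just before the statement: that every $\mathfrak h$-class is a union of $\mathfrak g$-classes, and that an $\mathfrak h$-class containing a central $\mathfrak g$-class is itself central. Recall in particular that the multiplication on $\zent g$ is induced by the fusion product: for central $\mathfrak g$-classes $\zcl_{1}$ and $\zcl_{2}$ the product $\zcl_{1}\zcl_{2}$ is the $\mathfrak g$-class containing any primary $r$ for which $N_{p_{1}p_{2}}^{r}\!>\!0$ with $p_{i}\!\in\!\zcl_{i}$ --- a well-defined class precisely because the $\zcl_{i}$ are central --- while the identity of $\zent g$ is the trivial class $\du g$.

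The homomorphism property is the only step with real content, and I would prove it by chasing representatives. Given $\zcl_{1},\zcl_{2}\!\in\!\zent g$, pick $p_{i}\!\in\!\zcl_{i}$ and a primary $r$ with $N_{p_{1}p_{2}}^{r}\!>\!0$; then $\zcl_{1}\zcl_{2}$ is the $\mathfrak g$-class of $r$, so $\rex gh(\zcl_{1}\zcl_{2})$ is the $\mathfrak h$-class of $r$. On the other hand $\zcl_{i}\!\subseteq\!\rex gh(\zcl_{i})$, so $p_{i}$ also represents the $\mathfrak h$-class $\rex gh(\zcl_{i})$, which is central because it contains the central $\mathfrak g$-class $\zcl_{i}$; applying the same fusion description of the multiplication, this time inside $\zent h$, shows that $\rex gh(\zcl_{1})\rex gh(\zcl_{2})$ is again the $\mathfrak h$-class of $r$, and hence equals $\rex gh(\zcl_{1}\zcl_{2})$. (Equivalently one can phrase this through central characters: $\rex gh$ amounts to restricting $\map{\varpi_{\zcl}}{\mathfrak g}{\mathbb C}$ to $\mathfrak h$, and restriction of functions is compatible with their pointwise product, which is the group law under the Pontryagin identification.) The delicate point here is not the representative-chasing, which is immediate from associativity and well-definedness of the fusion-induced product, but keeping track that every class one forms is central, so that the group operations and the map $\rex gh$ genuinely apply --- namely that $\zcl_{1}\zcl_{2}\!\in\!\zent g$ (closure of the center) and that $\rex gh(\zcl_{i})\!\in\!\zent h$ (by the second recalled fact); I expect this to be the main thing to get right.

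With the homomorphism in hand the kernel and image are read off. Since the identity of $\zent h$ is $\du h$, a class $\zcl\!\in\!\zent g$ lies in $\ke{\rex gh}$ iff the $\mathfrak h$-class containing it is $\du h$, i.e. iff $\zcl\!\subseteq\!\du h$, which is precisely $\ZR gh$. For the image, $\zcl\!\in\!\zent h$ is of the form $\rex gh(\zcl')$ iff it contains a central $\mathfrak g$-class $\zcl'$, since $\rex gh(\zcl')$ is by definition the $\mathfrak h$-class containing $\zcl'$; and because $\zcl$ is a union of $\mathfrak g$-classes, it contains a central $\mathfrak g$-class iff it meets the union $\usub{\zent g}$ of all such classes, which equals $\ZD g$ by \prettyref{eq:usubzent}. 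Hence $\im{\rex gh}=\set{\zcl\!\in\!\zent h}{\zcl\!\cap\!\ZD g\!\neq\!\emptyset}=\IR gh$.
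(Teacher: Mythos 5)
Your proposal is correct, and the kernel and image parts coincide with the paper's argument (identity of $\zent h$ is $\du h$, hence $\ke{\rex gh}\!=\!\ZR gh$; a central $\mathfrak h$-class is hit iff it contains a central $\mathfrak g$-class, which by \prettyref{eq:usubzent} is the condition $\zcl\cap\ZD g\neq\emptyset$). The one step where you genuinely diverge is the homomorphism property. The paper argues through central characters: since $\zcl\subseteq\rex gh(\zcl)$ and $\mathfrak h\subseteq\mathfrak g$, the character of $\rex gh(\zcl)$ agrees with that of $\zcl$ on $\mathfrak h$, so the characters of $\rex gh(\zcl_{1}\zcl_{2})$ and of $\rex gh(\zcl_{1})\rex gh(\zcl_{2})$ coincide on $\mathfrak h$, and the conclusion follows because a central class is uniquely determined by its central character --- exactly the route you sketch only parenthetically. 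Your primary argument instead chases fusion representatives, using the description of the product in $\zent g$ as ``the class containing any $r$ with $N_{p_{1}p_{2}}^{r}>0$, $p_{i}\in\zcl_{i}$.'' That description is indeed how the center's group law arises in \cite{Bantay2020a}, so the argument goes through, and you are right that the real content is bookkeeping of centrality (closure of $\zent g$ under the product, and the recalled fact that an $\mathfrak h$-class containing a central $\mathfrak g$-class is central). The trade-off: your route imports from the reference a characterization of the multiplication that the present paper never restates, whereas the paper's character argument is self-contained relative to what Section 3 recalls (multiplicativity of central characters along fusion and the Pontryagin-dual identification); on the other hand, your representative-level argument makes the compatibility with fusion more concrete and avoids invoking injectivity of the class-to-character map. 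Either way the statement is fully proved.
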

\begin{proof}
Let's denote by $\ex{\zcl}$ the $\mathfrak{h}$-class containing
the central $\mathfrak{g}$-class $\zcl\!\subseteq\!\zent g$. Recall
from \prettyref{sec:Central-quotients-and} that to each central class
one can assign its central character, and this provides an explicit
isomorphism between the center and its Pontryagin dual. Because $\zcl\!\subseteq\!\ex{\zcl}$
and $\mathfrak{h}\!\subseteq\!\mathfrak{g}$ implies $\cech{\ex{\zcl}}{\alpha}\!=\!\cech{\zcl}{\alpha}$
for $\alpha\!\in\!\mathfrak{h}$, it follows that $\cech{\ex{(\zcl_{1}\zcl_{2})}}{\alpha}\!=\!\cech{\zcl_{1}\zcl_{2}}{\alpha}\!=\!\cech{\zcl_{1}}{\alpha}\cech{\zcl_{2}}{\alpha}\!=\!\cech{\ex{\zcl_{1}}}{\alpha}\cech{\ex{\zcl_{2}}}{\alpha}$
for $\zcl_{1},\zcl_{2}\!\in\!\zent g$. Since a central class is
uniquely determined by its central character, this shows that $\ex{(\zcl_{1}\zcl_{2})}\!=\!\ex{\zcl_{1}}\ex{\zcl_{2}}$,
proving that $\rex gh$ is indeed a homomorphism. Now $\zcl\!\in\!\ke{\rex gh}$,
i.e. $\ex{\zcl}\!=\!\du h$ precisely when $\zcl\!\subseteq\!\du h$,
and this proves that $\ke{\rex gh}\!=\!\ZR gh$. But a central $\mathfrak{h}$-class
belongs to the image of $\rex gh$ in case it contains at least one
central $\mathfrak{g}$-class, i.e. if its intersection with $\usub{\zent g}$
is not empty, one finally concludes that $\im{\rex gh}\!=\!\set{\zcl\!\in\!\zent h\!}{\zcl\cap\ZD g\!\neq\emptyset}\!=\!\IR gh$
by using \prettyref{eq:usubzent}. 
\end{proof}
Notice that one has $\ZR gg\!=\!1$ and $\IR gg\!=\!\zent g$ as a
consequence of \prettyref{eq:usubzent}, while $\ZR gh\!=\!\zent g$
and $\IR gh\!=\!1$ for $\mathfrak{h\!\subseteq\!\mzq g}$, and in
particular $\ZR g{\mzq g}\!=\!\zent g$. Moreover, it follows from
\prettyref{thm:resker} and the homomorphism theorem \cite{Robinson}
that
\begin{equation}
\IR gh\!\cong\!\fact{\zent g\!}{\ZR gh}\label{eq:seq1}
\end{equation}
and by Lagrange's theorem this implies that the order of $\IR gh$
(being a homomorphic image, resp. subgroup) must divide both the order
of $\zent g$ and that of $\zent h$. In particular, if the order
of $\zent g$ and of $\zent h$ are coprime, then $\IR gh\!=\!1$,
hence $\ZR gh\!=\!\zent g$, and this implies $\ZD g\!=\!\usub{\zent g}\!\subseteq\!\du h$,
i.e. $\mathfrak{h\!\subseteq\!\mzq g}$.

\global\long\def\jj{\mathfrak{h}}%
\global\long\def\hh{\mathfrak{j}}%
\global\long\def\inn#1#2{\lat\!\left(\mathfrak{#2},\!\mathfrak{#1}\right)}%
\global\long\def\zb{\overline{\zcl}}%

We shall call $\map{\rex gh}{\zent g}{\zent h}$ the \emph{restriction
homomorphism}. Clearly, for $\mathfrak{h}\subseteq\mathfrak{j}\subseteq\mathfrak{g}$
the restriction homomorphisms $\rex g{\hh}$ and $\rex{\hh}{\jj}$
can be composed, and this observation leads to the following basic
result.
\begin{lem}
\label{lem:comp} $\ZR g{\hh}<\ZR g{\jj}$ and $\IR g{\jj}<\IR{\hh}{\jj}$
in case $\mathfrak{h}\subseteq\mathfrak{j}\subseteq\mathfrak{g}$,
and the diagram

~~~~~~~~~~~~~~~~~~~~~~~~~~~~\begin{tikzcd}
\zent{g}\arrow[swap]{rd}{\rex{g}{\hh}} \arrow[swap]{rr}{\rex{g}{\jj}}&&\zent{\jj}\\&\zent{\hh}\arrow[swap]{ru}{\rex{\hh}{\jj}}&
\end{tikzcd}
\begin{flushleft}
is commutative, hence $\rex{\hh}{\jj}\!\left(\IR g{\hh}\right)=\IR g{\jj}$
and $\rex g{\hh}\!\left(\ZR g{\jj}\right)=\IR g{\hh}\cap\ZR{\hh}{\jj}$.
\par\end{flushleft}
\end{lem}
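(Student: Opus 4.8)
The plan is to read the two claimed inclusions straight off the kernel/image formulas of \prettyref{thm:resker}, to prove that the triangle commutes by appealing to the uniqueness of the central $\jj$-class containing a prescribed central $\mathfrak{g}$-class, and then to extract the two closing identities by a purely formal diagram chase; there is no deep step here, the lemma being essentially a repackaging of \prettyref{thm:resker} together with the fact that the restriction maps compose in the evident way. For the inclusions: since $\mathfrak{h}\subseteq\mathfrak{j}$ and the duality map of $\lat$ is order-reversing we have $\du{\hh}\subseteq\du{\jj}$, and substituting this into $\ZR g{\hh}=\set{\zcl\in\zent g}{\zcl\subseteq\du{\hh}}$ gives $\ZR g{\hh}\subseteq\ZR g{\jj}$ at once; likewise, each central $\mathfrak{g}$-class $\zcl$ is contained, \emph{viewed as a set of primaries}, in the central $\hh$-class $\rex g{\hh}(\zcl)$ that it determines, so $\usub{\zent g}\subseteq\usub{\zent{\hh}}$, i.e. $\ZD g\subseteq\ZD{\hh}$ by \prettyref{eq:usubzent}, and then $\IR g{\jj}=\set{\zcl\in\zent{\jj}}{\zcl\cap\ZD g\neq\emptyset}$ forces $\IR g{\jj}\subseteq\IR{\hh}{\jj}$.

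For commutativity I would fix $\zcl\in\zent g$ and observe that both $\rex g{\jj}(\zcl)$ and $\rex{\hh}{\jj}\!\bigl(\rex g{\hh}(\zcl)\bigr)$ are central $\jj$-classes containing $\zcl$, the second one because $\zcl\subseteq\rex g{\hh}(\zcl)\subseteq\rex{\hh}{\jj}\!\bigl(\rex g{\hh}(\zcl)\bigr)$ by the very construction of the restriction maps. Since there is a unique central $\jj$-class containing a given central $\mathfrak{g}$-class (the observation recorded just before \prettyref{thm:resker}), these two classes coincide, and hence $\rex g{\jj}=\rex{\hh}{\jj}\circ\rex g{\hh}$; alternatively, one can rerun the central-character computation from the proof of \prettyref{thm:resker}, using $\cech{\rex g{\hh}(\zcl)}{\alpha}=\cech{\zcl}{\alpha}$ for $\alpha\in\mathfrak{h}$.

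The two identities then fall out formally. Commutativity gives $\IR g{\jj}=\im{\rex g{\jj}}=\rex{\hh}{\jj}\!\bigl(\im{\rex g{\hh}}\bigr)=\rex{\hh}{\jj}\!\bigl(\IR g{\hh}\bigr)$; it also gives $\ZR g{\jj}=\ke{\rex g{\jj}}=(\rex g{\hh})^{-1}\!\bigl(\ZR{\hh}{\jj}\bigr)$, whence the elementary set-theoretic identity $f\bigl(f^{-1}(S)\bigr)=\im f\cap S$, applied with $f=\rex g{\hh}$, yields $\rex g{\hh}\!\bigl(\ZR g{\jj}\bigr)=\IR g{\hh}\cap\ZR{\hh}{\jj}$. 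I do not anticipate a genuine obstacle anywhere; the one point that really needs care --- and the likeliest place to slip --- is the set-theoretic reading of the word ``class'': one must keep in mind throughout that a $\mathfrak{g}$-class is literally a set of primaries and that $\rex g{\hh}$ sends such a set into a (central) class containing it, which is precisely what makes the unions nest in the first paragraph and the triangle commute in the second.
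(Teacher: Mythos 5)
Your proposal is correct and follows essentially the same route as the paper: the two inclusions are read off the kernel/image formulas of \prettyref{thm:resker} (your derivation of $\ZD g\subseteq\ZD{\hh}$ via containment of central classes is just a minor variant of the paper's use of monotonicity of $\mzq{}$ plus duality), commutativity comes straight from the definition of the restriction maps, and the two closing identities follow by the same diagram chase, your identity $f\bigl(f^{-1}(S)\bigr)=\mathtt{im}(f)\cap S$ being only a compact repackaging of the paper's two-inclusion element argument.
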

\begin{proof}
$\mathfrak{h}\!\subseteq\!\mathfrak{j}\!\subseteq\!\mathfrak{g}$
implies $\du j\!\subseteq\!\du h$ and $\du{\left(\mzq g\right)}\!\subseteq\!\du{\left(\mzq j\right)}$,
hence $\zcl\!\in\!\ZR g{\hh}$ implies $\zcl\!\in\!\ZR gh$, while
$\zcl\!\in\!\IR g{\jj}$ gives $\zcl\!\in\!\IR j{\jj}$, proving the
two inclusions. The commutativity of the diagram, i.e. the equality
$\rex{\hh}{\jj}\!\circ\!\rex g{\hh}\!=\!\rex g{\mathfrak{h}}$ is
immediate from the definitions, and it implies at once the equality
$\rex{\hh}{\jj}\!\left(\IR g{\hh}\right)\!=\!\rex{\hh}{\jj}\!\left(\im{\rex g{\mathfrak{j}}}\right)\!=\!\im{\rex{\hh}{\jj}\!\circ\!\rex g{\hh}}\!=\!\im{\rex g{\mathfrak{h}}}\!=\!\IR g{\jj}$.
To prove the last claim, note that in case $\zcl\!\in\!\rex g{\hh}\!\left(\ZR g{\jj}\right)$
there exists $\zb\!\in\!\ZR g{\jj}$ such that $\zcl\!=\!\rex g{\hh}\!\left(\zb\right)$,
hence $\rex{\hh}{\jj}\!\left(\zcl\right)\!=\!\rex g{\mathfrak{h}}\!\left(\zb\right)\!=\!\du h$,
i.e. $\zcl\!\in\!\ZR{\hh}{\jj}$, and this shows that $\rex g{\hh}\!\left(\ZR g{\jj}\right)\!<\!\ZR{\hh}{\jj}$;
as the inclusion $\rex g{\hh}\!\left(\ZR g{\jj}\right)\!<\!\im{\rex g{\hh}}\!=\!\IR g{\hh}$
is immediate, we conclude that $\rex g{\hh}\!\left(\ZR g{\jj}\right)\!\subseteq\!\IR g{\hh}\!\cap\!\ZR{\hh}{\jj}$.
On the other hand, $\zcl\!\in\!\IR g{\hh}\cap\ZR{\hh}{\jj}$ if there
exists $\zb\!\in\!\zent g$ such that $\zcl\!=\!\rex g{\hh}\!\left(\zb\right)$
belongs to $\ZR{\hh}{\jj}\!=\!\ke{\rex{\hh}{\jj}}$, and this happens
precisely when $\rex g{\jj}\!\left(\zb\right)\!=\!\left(\rex{\hh}{\jj}\!\circ\!\rex g{\hh}\right)\!\left(\zb\right)\!=\!\rex{\hh}{\jj}\!\left(\zcl\right)\!=\!\du h$,
i.e. $\zb\!\in\!\ke{\rex g{\jj}}$, which implies $\zcl\!\in\!\rex g{\hh}\!\left(\ZR g{\jj}\right)$,
proving finally that $\rex g{\hh}\!\left(\ZR g{\jj}\right)\!\supseteq\!\IR g{\hh}\cap\ZR{\hh}{\jj}$.
\end{proof}
Before the next result, let's recall the isomorphism theorems of group
theory \cite{Robinson}: for any two subgroups $A$ and $B$ of an
abelian group $C$, one has
\begin{equation}
\fact A{A\cap B}\cong\fact{AB}B\label{eq:1stiso}
\end{equation}
Moreover, in case $A\!<\!B$ one has
\begin{equation}
\fact{\left(\fact CA\right)\!}{\!\left(\fact BA\right)\!\cong\!\fact CB}\label{eq:2ndiso}
\end{equation}

\begin{lem}
\label{lem:compcor}For $\mathfrak{h}\!\subseteq\!\mathfrak{j}\!\subseteq\!\mathfrak{g}$
one has isomorphisms
\begin{align}
\IR g{\hh}\!\cap\!\ZR{\hh}{\jj} & \cong\fact{\ZR g{\jj}\!}{\ZR g{\hh}}\label{eq:igj}\\
\IR g{\jj} & \cong\fact{\IR g{\hh}\!\ZR{\hh}{\jj}\!}{\ZR{\hh}{\jj}}\label{eq:igjzjh}\\
\shortintertext{and}\fact{\IR{\hh}{\jj}\!}{\IR g{\jj}} & \cong\fact{\zent{\hh}\!}{\IR g{\hh}\!\ZR{\hh}{\jj}}\label{eq:zgjzgh}
\end{align}
\end{lem}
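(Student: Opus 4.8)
The plan is to obtain all three isomorphisms mechanically by combining \prettyref{lem:comp} with the homomorphism theorem and the two isomorphism theorems \prettyref{eq:1stiso} and \prettyref{eq:2ndiso}; in each case one feeds a well-chosen restriction of $\rex g{\hh}$ or $\rex{\hh}{\jj}$ into the homomorphism theorem. The only thing to keep straight is where each group lives: $\ZR g{\jj}$ and $\ZR g{\hh}$ sit inside $\zent g$, the groups $\IR g{\hh}$ and $\ZR{\hh}{\jj}$ (hence also their product) sit inside $\zent{\hh}$, and $\IR g{\jj}$, $\IR{\hh}{\jj}$ sit inside $\zent{\jj}$; recall too from \prettyref{thm:resker} that $\ZR g{\hh}$ and $\ZR{\hh}{\jj}$ are the kernels of $\rex g{\hh}$ and $\rex{\hh}{\jj}$ respectively.

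First I would prove \prettyref{eq:igj} by restricting $\rex g{\hh}$ to the subgroup $\ZR g{\jj}$ of $\zent g$. By \prettyref{lem:comp} the image of this restriction is $\rex g{\hh}\!\left(\ZR g{\jj}\right)\!=\!\IR g{\hh}\!\cap\!\ZR{\hh}{\jj}$, while its kernel is $\ZR g{\jj}\!\cap\!\ZR g{\hh}\!=\!\ZR g{\hh}$, the last equality holding because $\ZR g{\hh}\!<\!\ZR g{\jj}$ by \prettyref{lem:comp}; the homomorphism theorem then delivers $\IR g{\hh}\!\cap\!\ZR{\hh}{\jj}\!\cong\!\fact{\ZR g{\jj}}{\ZR g{\hh}}$.

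Next I would prove \prettyref{eq:igjzjh} by restricting $\rex{\hh}{\jj}$ to the subgroup $\IR g{\hh}$ of $\zent{\hh}$: by \prettyref{lem:comp} its image is $\rex{\hh}{\jj}\!\left(\IR g{\hh}\right)\!=\!\IR g{\jj}$ and its kernel is $\IR g{\hh}\!\cap\!\ZR{\hh}{\jj}$, so the homomorphism theorem gives $\IR g{\jj}\!\cong\!\fact{\IR g{\hh}}{\left(\IR g{\hh}\!\cap\!\ZR{\hh}{\jj}\right)}$, and \prettyref{eq:1stiso} (applied with $A\!=\!\IR g{\hh}$ and $B\!=\!\ZR{\hh}{\jj}$ inside $\zent{\hh}$) rewrites the right-hand side as $\fact{\IR g{\hh}\!\ZR{\hh}{\jj}}{\ZR{\hh}{\jj}}$. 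For \prettyref{eq:zgjzgh} I would use that $\rex{\hh}{\jj}$ induces an isomorphism $\fact{\zent{\hh}}{\ZR{\hh}{\jj}}\!\cong\!\IR{\hh}{\jj}$ (this is \prettyref{eq:seq1} for $\mathfrak{h}\!\subseteq\!\mathfrak{j}$); under it the subgroup $\IR g{\jj}\!=\!\rex{\hh}{\jj}\!\left(\IR g{\hh}\right)$ of $\IR{\hh}{\jj}$ corresponds to $\fact{\IR g{\hh}\!\ZR{\hh}{\jj}}{\ZR{\hh}{\jj}}$, and then \prettyref{eq:2ndiso} with $C\!=\!\zent{\hh}$, $A\!=\!\ZR{\hh}{\jj}$, $B\!=\!\IR g{\hh}\!\ZR{\hh}{\jj}$ (legitimate since $A\!<\!B$) collapses $\fact{\left(\fact{\zent{\hh}}{\ZR{\hh}{\jj}}\right)}{\left(\fact{\IR g{\hh}\!\ZR{\hh}{\jj}}{\ZR{\hh}{\jj}}\right)}$ to $\fact{\zent{\hh}}{\IR g{\hh}\!\ZR{\hh}{\jj}}$, which is the claim.

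The hard part will be the one step that is not purely formal, namely the identification just used in \prettyref{eq:zgjzgh}: one must verify that, under the canonical isomorphism $\fact{\zent{\hh}}{\ZR{\hh}{\jj}}\!\cong\!\IR{\hh}{\jj}$, the subgroup $\IR g{\jj}$ pulls back exactly to $\fact{\IR g{\hh}\!\ZR{\hh}{\jj}}{\ZR{\hh}{\jj}}$ and not to something smaller. This comes down to the general fact that the full preimage of $\rex{\hh}{\jj}\!\left(\IR g{\hh}\right)$ under $\rex{\hh}{\jj}$ is the product of $\IR g{\hh}$ with the kernel $\ZR{\hh}{\jj}$; once this is in hand, everything else is a routine chaining of \prettyref{lem:comp} with \prettyref{eq:1stiso} and \prettyref{eq:2ndiso}.
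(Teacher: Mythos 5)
Your proof is correct and is essentially the paper's own argument: \prettyref{eq:igj} is obtained from the restriction of $\rex gj$ to $\ZR gh$ via the homomorphism theorem, \prettyref{eq:igjzjh} from the restriction of $\rex jh$ to $\IR gj$ combined with \prettyref{eq:1stiso}, and \prettyref{eq:zgjzgh} from \prettyref{eq:2ndiso} with $A\!=\!\ZR jh$, $B\!=\!\IR gj\ZR jh$, $C\!=\!\zent j$ together with \prettyref{eq:seq1}. The only (harmless) difference is that you spell out the compatibility step --- that under the isomorphism $\fact{\zent j}{\ZR jh}\!\cong\!\IR jh$ the subgroup $\IR gh\!=\!\rex jh\!\left(\IR gj\right)$ pulls back exactly to $\fact{\IR gj\ZR jh}{\ZR jh}$ --- which the paper leaves implicit.
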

\begin{proof}
Consider the restriction of $\rex g{\hh}$ to $\ZR g{\jj}$. By \prettyref{lem:comp},
its image equals $\IR g{\hh}\cap\ZR{\hh}{\jj}$, while its kernel
is $\ZR g{\hh}$, and \prettyref{eq:igj} follows from the homomorphism
theorem. Next, denote by $\phi$ the restriction of $\rex{\hh}{\jj}$
to $\IR g{\hh}$: since, once again by \prettyref{lem:comp}, one
has $\im{\phi}\!=\!\rex{\hh}{\jj}\!\left(\IR g{\hh}\right)\!=\!\IR g{\jj}$
and $\ke{\phi}\!=\!\rex g{\hh}\!\left(\ZR g{\jj}\right)\!=\!\IR g{\hh}\cap\ZR{\hh}{\jj}$,
hence $\IR g{\jj}\!\cong\!\fact{\IR g{\hh}\!}{\IR g{\hh}\cap\ZR{\hh}{\jj}}$,
and \prettyref{eq:igjzjh} follows from this by substituting $A\!=\!\IR g{\hh}$
and $B\!=\!\ZR{\hh}{\jj}$ into \prettyref{eq:1stiso}. Finally, \prettyref{eq:zgjzgh}
follows from \prettyref{eq:igjzjh} by substituting $A\!=\!\ZR{\hh}{\jj}$,
$B\!=\!\IR g{\hh}\!\ZR{\hh}{\jj}$ and $C\!=\!\zent{\hh}$ into \prettyref{eq:2ndiso}
\[
\fact{\IR{\hh}{\jj}\!}{\IR g{\jj}}\!\cong\!\fact{\left(\fact{\zent{\hh}\!}{\ZR{\hh}{\jj}}\right)\!}{\!\left(\fact{\IR g{\hh}\!\ZR{\hh}{\jj}\!}{\ZR{\hh}{\jj}}\right)}\!\cong\!\fact{\zent{\hh}\!}{\IR g{\hh}\!\ZR{\hh}{\jj}}
\]
and taking into account that $\IR{\hh}{\jj}\!\cong\!\fact{\zent{\hh}}{\ZR{\hh}{\jj}}$
by \prettyref{eq:seq1}.
\end{proof}
\global\long\def\jj{\mathfrak{h}_{\mathrm{1}}}%
\global\long\def\hh{\mathfrak{h}_{\mathrm{2}}}%

\begin{lem}
\label{lem:compZR}$\ZR g{\jj}\cap\ZR g{\hh}=\ZR g{\jj\mathrm{\vee\,}\hh}$
and $\ZR g{\jj}\ZR g{\hh}\!<\!\ZR g{\jj\!\cap\!\hh}$ in case $\mathfrak{\jj},\hh\!\subseteq\!\mathfrak{g}$,
and in particular $\ZR g{h\!\vee\!\mzq g}\!=\!\ZR gh$ and $\ZR g{h\!\cap\!\mzq g}\!=\!\zent g$
for any $\mathfrak{h}\!\subseteq\!\mathfrak{g}$.
\end{lem}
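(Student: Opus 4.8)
The plan is to extract both equalities and the inclusion directly from the description of $\ZR gh$ provided by \prettyref{thm:resker}, namely that $\ZR gh$ consists of those central classes $\zcl\!\in\!\zent g$ which, viewed as collections of primaries, satisfy $\zcl\!\subseteq\!\du h$, together with the self-duality of $\lat$ recalled in \prettyref{sec:The-deconstruction-lattice}: the duality map is order-reversing and satisfies $\du{\left(\jj\!\vee\!\hh\right)}\!=\!\du{\jj}\!\cap\!\du{\hh}$. Note that $\jj,\hh\!\subseteq\!\mathfrak{g}$ entails $\jj\!\vee\!\hh\!\subseteq\!\mathfrak{g}$ and $\jj\!\cap\!\hh\!\subseteq\!\mathfrak{g}$, so that every group occurring in the statement is well defined. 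For the first identity I would simply note that a central class $\zcl\!\in\!\zent g$ lies in $\ZR g{\jj}\!\cap\!\ZR g{\hh}$ exactly when $\zcl\!\subseteq\!\du{\jj}$ and $\zcl\!\subseteq\!\du{\hh}$, i.e. when $\zcl\!\subseteq\!\du{\jj}\!\cap\!\du{\hh}\!=\!\du{\left(\jj\!\vee\!\hh\right)}$, which is precisely the condition defining membership in $\ZR g{\jj\vee\hh}$; since both sides are subgroups of $\zent g$ (kernels of the relevant restriction homomorphisms), this yields the claimed equality.

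For the second assertion I would use that $\jj\!\cap\!\hh\!\subseteq\!\jj$ and $\jj\!\cap\!\hh\!\subseteq\!\hh$, so that order-reversal gives $\du{\jj}\!\subseteq\!\du{\left(\jj\!\cap\!\hh\right)}$ and $\du{\hh}\!\subseteq\!\du{\left(\jj\!\cap\!\hh\right)}$; by the membership criterion this translates into $\ZR g{\jj}\!\subseteq\!\ZR g{\jj\cap\hh}$ and $\ZR g{\hh}\!\subseteq\!\ZR g{\jj\cap\hh}$. Since $\zent g$ is abelian, the product $\ZR g{\jj}\ZR g{\hh}$ of these two subgroups is again a subgroup of $\zent g$, and it is contained in the subgroup $\ZR g{\jj\cap\hh}$ because the latter contains both factors; this is the asserted inclusion (which one should not expect to be an equality in general).

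For the two particular cases I would specialise the above to $\hh\!=\!\mzq g$, which is permissible because $\mzq g\!\subseteq\!\mathfrak{g}$, and recall from the remarks following \prettyref{thm:resker} that $\ZR g{\mzq g}\!=\!\zent g$. The first identity then gives $\ZR g{h\vee\mzq g}\!=\!\ZR gh\!\cap\!\zent g\!=\!\ZR gh$, using that $\ZR gh$ is a subgroup of $\zent g$; the inclusion from the second assertion gives $\zent g\!=\!\ZR gh\,\ZR g{\mzq g}\!\subseteq\!\ZR g{h\cap\mzq g}\!\subseteq\!\zent g$, forcing $\ZR g{h\cap\mzq g}\!=\!\zent g$ (equivalently, $\mathfrak{h}\!\cap\!\mzq g\!\subseteq\!\mzq g$ implies $\du{\left(\mathfrak{h}\!\cap\!\mzq g\right)}\!\supseteq\!\du{\left(\mzq g\right)}\!=\!\usub{\zent g}$, so each $\zcl\!\in\!\zent g$ satisfies $\zcl\!\subseteq\!\usub{\zent g}\!\subseteq\!\du{\left(\mathfrak{h}\!\cap\!\mzq g\right)}$). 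No step goes beyond the defining property of $\ZR gh$, the duality relations, and the abelianness of $\zent g$, so there is no serious obstacle here; the only points deserving a moment's care are that the product of two subgroups of the abelian group $\zent g$ is again a subgroup, and that $\ZR g{\mzq g}$ is the whole of $\zent g$ rather than a proper subgroup.
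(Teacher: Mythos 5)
Your argument is correct and follows essentially the same route as the paper: the first equality via the membership criterion $\zcl\subseteq\du{\mathfrak{h}}$ together with $\du{\left(\mathfrak{h}_{1}\vee\mathfrak{h}_{2}\right)}=\du{\mathfrak{h}_{1}}\cap\du{\mathfrak{h}_{2}}$, the inclusion from monotonicity of $\mathfrak{h}\mapsto\ZR gh$ (which the paper obtains by citing its earlier lemma rather than re-deriving it from the order-reversing duality, as you do), and the special cases by setting $\mathfrak{h}_{2}=\mzq g$ and using $\ZR g{\mzq g}=\zent g$. No gap.
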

\begin{proof}
$\zcl\!\in\!\zent g$ belongs to $\ZR g{\jj}\!\cap\!\ZR g{\hh}$ if
it is contained in both $\du{\jj}$ and $\du{\hh}$, i.e. $\zcl\!\subseteq\!\du{\jj}\!\cap\du{\hh}=\!\du{\left(\jj\mathrm{\vee\,}\hh\right)}$,
proving the first claim. But according to \prettyref{lem:compcor}
the inclusions $\jj\!\cap\!\hh\!\subseteq\!\mathfrak{\jj},\hh\!\subseteq\!\mathfrak{g}$
imply that both $\ZR g{\jj}$ and $\ZR g{\hh}$ are subgroups of $\ZR g{\jj\!\cap\!\hh}$,
hence $\ZR g{\jj}\ZR g{\hh}\!<\!\ZR g{\jj\!\cap\!\hh}$. The final
claim follows from the above by substituting $\jj\!=\!\mathfrak{h}$
and $\hh\!=\!\mzq g$, and taking into account $\ZR g{\mzq g}\!=\!\zent g$.
\end{proof}

\section{The Galois correspondence\label{sec:Galois}}

Let's introduce the notation $\ext hg$ to indicate that $\mathfrak{h\!\in\!\lat}$
is a central quotient of $\mathfrak{g}\!\in\!\lat$ (equivalently,
that $\mathfrak{g}$ is a central extension of $\mathfrak{h}$); in
other words, $\ext hg$ means $\mzq g\!\subseteq\!\mathfrak{h}\!\subseteq\!\mathfrak{g}$
or, what is the same, $\mathfrak{h}\!\subseteq\!\mathfrak{g}\!\subseteq\!\cov{\mathfrak{h}}$
(c.f. \prettyref{sec:Central-quotients-and}). Note that this is not
a transitive relation, i.e. $\ext hj$ and $\ext jg$ does not necessarily
imply $\ext hg$. On the other hand, because every central quotient
is a central extension of the maximal central quotient,  $\ext hg$
implies $\ext{\mzq g}h$, and this gives in turn $\ext{\mzq h}{\mzq g}$.

\begin{thm}
\label{thm:resim} $\usub{\ZR gh}\!=\!\du h$ and $\usub{\IR gh}\!=\!\ZD g$
in case $\ext hg$, hence\textup{ $\mathfrak{h}\!=\!\fact{\mathfrak{g}}{\ZR gh}$
}\textup{\emph{and}}\textup{  $\fact{\mathfrak{h}}{\IR gh\!=\!\mzq g}$;}\textup{\emph{
in particular,}}\textup{ 
\begin{equation}
\IR gh\!=\!\ZR h{\mzq g}\label{eq:fund}
\end{equation}
}
\end{thm}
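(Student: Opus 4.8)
Although \prettyref{eq:fund} is stated above as a corollary, the cleanest route is to prove it first and read everything else off it; the tools are the Galois-type correspondence of \prettyref{sec:Central-quotients-and} between central quotients and subgroups of the center, the formulas $\ke{\rex gh}=\ZR gh$ and $\im{\rex gh}=\IR gh$ of \prettyref{thm:resker}, and the commutative triangle of \prettyref{lem:comp}. Recall that $\ext hg$ means $\mzq g\!\subseteq\!\mathfrak h\!\subseteq\!\mathfrak g$ and, as noted just before the theorem, forces $\ext{\mzq g}h$, i.e. $\mzq h\!\subseteq\!\mzq g\!\subseteq\!\mathfrak h$. The first step is to dispose of the $\ZR gh$ part: since $\ext hg$ places $\mathfrak h$ between $\mzq g$ and $\mathfrak g$, the correspondence realizes $\mathfrak h$ as the central quotient $\zquot g H$ with $H=\set{\zcl\!\in\!\zent g}{\zcl\!\subseteq\!\du h}$, and by \prettyref{thm:resker} this $H$ is exactly $\ZR gh$; hence $\mathfrak h=\fact{\mathfrak g}{\ZR gh}$, and the same reference gives $\usub{\ZR gh}=\usub H=\du h$.

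Next I would establish \prettyref{eq:fund}. The inclusion $\ZR h{\mzq g}\!\subseteq\!\IR gh$ is trivial: by \prettyref{thm:resker} applied to $\mzq g\!\subseteq\!\mathfrak h$ one has $\ZR h{\mzq g}=\set{\zcl\!\in\!\zent h}{\zcl\!\subseteq\!\ZD g}$ while $\IR gh=\set{\zcl\!\in\!\zent h}{\zcl\!\cap\!\ZD g\!\neq\!\emptyset}$, and a nonempty class contained in $\ZD g$ certainly meets it. For the reverse inclusion, feed the chain $\mzq g\!\subseteq\!\mathfrak h\!\subseteq\!\mathfrak g$ into \prettyref{lem:comp}: the triangle commutes, so $\rex h{\mzq g}\!\circ\!\rex gh=\rex g{\mzq g}$, and $\rex g{\mzq g}$ is the trivial homomorphism because $\ke{\rex g{\mzq g}}=\ZR g{\mzq g}=\zent g$ (the remark following \prettyref{thm:resker}); therefore $\IR gh=\im{\rex gh}\!\subseteq\!\ke{\rex h{\mzq g}}=\ZR h{\mzq g}$, which is \prettyref{eq:fund}.

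Everything else follows. Applying the identity $\usub{\ZR gh}=\du h$ established above to the pair $\ext{\mzq g}h$, together with \prettyref{eq:fund}, gives $\usub{\IR gh}=\usub{\ZR h{\mzq g}}=\ZD g$, the second asserted identity. And, since $\ext{\mzq g}h$, the same correspondence read inside $\mathfrak h$ realizes $\mzq g$ as the central quotient of $\mathfrak h$ attached to the subgroup $\set{\zcl\!\in\!\zent h}{\zcl\!\subseteq\!\ZD g}=\ZR h{\mzq g}=\IR gh$, i.e. $\fact{\mathfrak h}{\IR gh}=\mzq g$, completing the proof.

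The one step carrying real weight is the inclusion $\IR gh\!\subseteq\!\ZR h{\mzq g}$ --- promoting ``the central $\mathfrak h$-class $\zcl$ meets $\ZD g$'' to ``$\zcl$ lies in $\ZD g$'' --- which is precisely what makes $\im{\rex gh}$ a kernel, hence a subgroup of $\zent h$ that defines a central quotient; factoring restriction through $\mzq g$ and exploiting that $\rex g{\mzq g}$ is trivial disposes of it at once. If one prefers to avoid \prettyref{lem:comp}, a mass count gives the same conclusion: the $\mathfrak h$-classes partition the primaries, so $\sum_{\zcl\in\IR gh}\cs\zcl=\FA{\IR gh}\cs h$, which by \prettyref{eq:seq1}, \prettyref{eq:zqmass} and $\mathfrak h=\fact{\mathfrak g}{\ZR gh}$ equals $\FA{\zent g}\cs g=\sum_{\zcl\in\zent g}\cs\zcl$, the total mass of the disjoint classes constituting $\ZD g$; combined with the obvious $\ZD g\!\subseteq\!\usub{\IR gh}$ (each $\zcl\!\in\!\zent g$ sits inside $\rex gh(\zcl)\!\in\!\IR gh$) and positivity of quantum dimensions this again forces $\usub{\IR gh}=\ZD g$. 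The remainder is routine bookkeeping with \prettyref{thm:resker} and the correspondence.
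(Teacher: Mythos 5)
Your proposal is correct and covers every assertion of the theorem. Its first half is the same as the paper's: you invoke the correspondence of \prettyref{sec:Central-quotients-and} to write $\mathfrak h$ as the quotient of $\mathfrak g$ by the subgroup $H=\set{\zcl\in\zent g}{\zcl\subseteq\du h}$, identify $H$ with $\ZR gh$ via \prettyref{thm:resker}, and read off $\mathfrak h=\fact{\mathfrak g}{\ZR gh}$, $\usub{\ZR gh}=\du h$, and (applying the same correspondence to $\ext{\mzq g}h$) $\fact{\mathfrak h}{\IR gh}=\mzq g$. Where you genuinely diverge is \prettyref{eq:fund} and the order of deductions. The paper proves $\usub{\IR gh}=\ZD g$ first, by a purely combinatorial argument with classes: since $\usub{\ZR h{\mzq g}}=\ZD g$, every $\mathfrak h$-class contained in $\ZD g$ is central, and since $\ZD g$ is a union of $\mathfrak h$-classes, any class meeting $\ZD g$ lies inside it; equality of the unions $\usub{\IR gh}=\ZD g=\usub{\ZR h{\mzq g}}$ then forces $\IR gh=\ZR h{\mzq g}$. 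You instead prove \prettyref{eq:fund} first, obtaining the nontrivial inclusion $\IR gh\subseteq\ZR h{\mzq g}$ by factoring the trivial homomorphism $\rex g{\mzq g}$ (trivial because $\ZR g{\mzq g}=\zent g$) through $\zent h$ via the commutative triangle of \prettyref{lem:comp}, so that $\im{\rex gh}$ is forced into $\ke{\rex h{\mzq g}}$, the reverse inclusion being immediate; you then recover $\usub{\IR gh}=\ZD g$ as a corollary. Both routes rest on the same Galois correspondence, and there is no circularity in yours since \prettyref{lem:comp} precedes the theorem and is independent of it; what your version buys is that the crucial promotion from ``meets $\ZD g$'' to ``contained in $\ZD g$'' becomes a one-line consequence of functoriality rather than a set-theoretic manipulation of class partitions. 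Your alternative mass count is also sound (granting the paper's reading of $\cs g$ in \prettyref{eq:zqmass} as the common mass of central classes) and is in effect a quantitative version of the paper's union argument.
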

\begin{proof}
If $\mathfrak{h}$ is a central quotient of $\mathfrak{g}$, then
$\mathfrak{h}\!=\!\fact{\mathfrak{g}}G$ for some $G\!<\!\zent g$
with $\usub G\!=\!\du h$, c.f. \prettyref{sec:Central-quotients-and},
hence all $\mathfrak{g}$-classes contained in $\du h$ are central
and $\usub{\ZR gh}\!=\!\du h$, consequently $G\!=\!\ZR gh$. But
$\ext hg$ implies $\ext{\mzq g}h$, hence $\mzq g\!=\!\fact{\mathfrak{h}}H$
for some subgroup $H\!<\!\zent h$, and because $\usub H\!=\!\du{\left(\mzq g\right)}$,
every $\mathfrak{h}$-class contained in $\ZD g$ is central; since
$\ZD g$ is a union of $\mathfrak{h}$-classes, we get that $\usub{\IR gh}\!=\!\du{\left(\mzq g\right)}\!=\!\usub{\ZR h{\mzq g}}$,
proving that indeed $\IR gh\!=\!\ZR h{\mzq g}$.
\end{proof}
To better grasp the meaning of \prettyref{thm:resim}, let's rewrite
it in a (perhaps) more suggestive form. To this end, consider a subgroup
$H\!<\!\zent g$, and let $\mathfrak{h}\!=\!\zquot gH$, so that $\ext hg$.
Since $\ext hg$ implies $\ext{\mzq h}{\mzq g}$, there exists a subgroup
$\boldsymbol{\partial}H\!<\!\zent{\mzq g}$ such that $\mzq h\!=\!\zquot{\mzq g}{\mathit{\boldsymbol{\partial}H}}$,
and by \prettyref{thm:resim} one has $\ZR gh\!=\!H$ and $\ZR{\mzq g}{\mzq h}\!=\!\boldsymbol{\partial}H$.
It follows from \prettyref{eq:seq1} that there is a short exact sequence
\begin{equation}
\begin{CD}1@>>>\ZR gh@>>>\zent g@>>>\IR gh@>>>1\end{CD}\label{eq:seq1A}
\end{equation}
As $\ext hg$ implies $\ext{\mzq g}h$, we can substitute $\mzq g$
for $\mathfrak{h}$, and simultaneously $\mathfrak{h}$ for $\mathfrak{g}$
in \prettyref{eq:seq1A} to yield the exact sequence
\begin{equation}
\minCDarrowwidth26pt\begin{CD}1@>>>\ZR h{\mzq g}@>>>\zent h@>>>\IR h{\mzq g}@>>>1\end{CD}\label{eq:seq1B}
\end{equation}

But $\IR h{\mzq g}\!=\!\ZR{\mzq g}{\mzq h}\!=\!\boldsymbol{\partial}H$
and $\ZR h{\mzq g}\!=\!\IR gh\!\cong\!\fact{\zent g\!}{\ZR gh}\!=\!\fact{\zent g\!}{\!H}$
by \prettyref{thm:resim} and \prettyref{eq:seq1}, hence we arrive
finally at the exact sequence
\begin{equation}
\minCDarrowwidth29pt\begin{CD}1@>>>\fact{\zent g\!}{\!H}@>>>\zent{\zquot g{\mathit{H}}}@>>>\boldsymbol{\partial}H@>>>1\end{CD}\label{eq:seq1c}
\end{equation}
which shows clearly that $\boldsymbol{\partial}H$ measures the extent
by which $\zent{\zquot g{\mathit{H}}}$ differs from the naive expectation
$\fact{\zent g\!}{\!H}$. In particular, one has $\zent{\zquot g{\mathit{H}}}\!\cong\!\fact{\zent g\!}{\!H}$
only in case $\boldsymbol{\partial}H\!=\!1$, i.e. when $\mzq g\!=\!\mzq h$.

\global\long\def\jj{\mathfrak{h}}%
\global\long\def\hh{\mathfrak{j}}%

\begin{lem}
\label{lem:quotincl}If $\ext hg$ and $\mathfrak{h}\!\subseteq\!\mathfrak{j}\!\subseteq\!\mathfrak{g}$,
then $\ZR jh\!<\!\IR gj$, consequently
\begin{align}
\ZR{\hh}{\jj} & \cong\fact{\ZR g{\jj}\!}{\ZR g{\hh}}\label{eq:igj2}\\
\shortintertext{and}\ZR{\mzq g}{\mzq j} & \cong\fact{\IR{\hh}{\jj}\!}{\IR g{\jj}}\label{eq:ijh}
\end{align}
\end{lem}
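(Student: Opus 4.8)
The plan is to reduce everything to a single inclusion of subgroups of $\zent{\jj}$, namely $\ZR{\hh}{\jj}\!<\!\IR g{\jj}$, and then feed it into the isomorphisms already established in \prettyref{lem:compcor}. First I would observe that the stated hypotheses $\ext hg$ and $\mathfrak{h}\!\subseteq\!\mathfrak{j}\!\subseteq\!\mathfrak{g}$ place us exactly in the situation where \prettyref{thm:resim} applies to the pair $\mathfrak{g},\jj$: since $\mathfrak{h}$ is a central quotient of $\mathfrak{g}$ and $\mathfrak{j}$ lies between $\mzq g\!\subseteq\!\mathfrak{h}$ and $\mathfrak{g}$, the element $\mathfrak{j}$ is itself a central quotient of $\mathfrak{g}$, so $\ext jg$ holds and $\IR g{\jj}\!=\!\ZR{\jj}{\mzq g}$ by \prettyref{eq:fund}. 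Now $\ZR{\hh}{\jj}\!=\!\set{\zcl\!\in\!\zent{\jj}}{\zcl\!\subseteq\!\du h}$, and since $\mathfrak{h}\!\subseteq\!\mathfrak{j}$ gives $\mzq{\mathfrak{j}}\!\subseteq\!\mathfrak{h}$ and hence $\du h\!\subseteq\!\du{\left(\mzq{\mathfrak{j}}\right)}$, every such $\zcl$ satisfies $\zcl\!\subseteq\!\du{\left(\mzq{\mathfrak{j}}\right)}$, i.e. $\zcl\!\in\!\ZR{\jj}{\mzq j}\!=\!\IR g{\jj}$. This is the inclusion $\ZR jh\!<\!\IR gj$ asserted in the lemma.

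With that inclusion in hand, the two displayed isomorphisms are immediate specializations of \prettyref{lem:compcor}. For \prettyref{eq:igj2} I would invoke \prettyref{eq:igj}, which already reads $\IR g{\hh}\!\cap\!\ZR{\hh}{\jj}\!\cong\!\fact{\ZR g{\jj}}{\ZR g{\hh}}$; the point is that the hypothesis $\ext hg$ upgrades the intersection on the left to all of $\ZR{\hh}{\jj}$, because $\ZR{\hh}{\jj}\!<\!\IR g{\jj}\!<\!\IR g{\hh}$ — the second inclusion being the monotonicity $\IR g{\jj}\!<\!\IR{\hh}{\jj}$... more simply, $\ZR{\hh}{\jj}$ consists of $\mathfrak{j}$-classes contained in $\du h$, and every such class contains a central $\mathfrak{g}$-class by the argument of \prettyref{thm:resim} (all $\mathfrak{g}$-classes in $\du h\!\subseteq\!\du{\left(\mzq j\right)}\!=\!\usub{\IR g{\jj}}$... actually $\usub{\ZR gh}\!=\!\du h$ directly), so $\ZR{\hh}{\jj}\!\subseteq\!\IR g{\hh}$ and the intersection is redundant. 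Hence $\ZR{\hh}{\jj}\!\cong\!\fact{\ZR g{\jj}}{\ZR g{\hh}}$. For \prettyref{eq:ijh} I would start from \prettyref{eq:zgjzgh}, $\fact{\IR{\hh}{\jj}}{\IR g{\jj}}\!\cong\!\fact{\zent{\hh}}{\IR g{\hh}\ZR{\hh}{\jj}}$, and then use $\ZR{\hh}{\jj}\!<\!\IR g{\hh}$ (just established) to collapse $\IR g{\hh}\ZR{\hh}{\jj}\!=\!\IR g{\hh}$, so the right side becomes $\fact{\zent{\hh}}{\IR g{\hh}}$; finally \prettyref{thm:resim} applied to $\mathfrak{h}$ in place of $\mathfrak{g}$ (legitimate since $\ext hg$ implies $\ext{\mzq g}h$, and $\mzq j$ sits between $\mzq h$ and $\mathfrak{h}$, so $\ext{\mzq j}h$) gives $\fact{\mathfrak{h}}{\IR h{\cdot}}\!=\!\mzq{\cdot}$-type statements; concretely $\ZR{\mzq g}{\mzq j}\!\cong\!\fact{\IR{\hh}{\jj}}{\IR g{\jj}}$ will follow once we identify $\fact{\zent{\hh}}{\IR g{\hh}}$ with $\zent{\mzq h}$ via \prettyref{thm:resim} and then observe $\ZR{\mzq g}{\mzq j}$ is the analogue of $\ZR{\hh}{\jj}$ one level down, using \prettyref{eq:igj2} applied to the triple $\mzq g,\mzq h,\mzq j$ (whose hypotheses hold because $\ext{\mzq g}{\mzq g}$ trivially and $\mzq h\!\subseteq\!\mzq j\!\subseteq\!\mzq g$ by order-preservation of $\mzq{\cdot}$).

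The main obstacle I anticipate is bookkeeping rather than mathematics: one must be careful that the various triples of lattice elements to which \prettyref{lem:compcor} and \prettyref{thm:resim} are applied genuinely satisfy the containment and central-quotient hypotheses. The delicate point is \prettyref{eq:ijh}, where the cleanest route is to recognize the right-hand side $\fact{\IR{\hh}{\jj}}{\IR g{\jj}}$ as literally the left-hand side of \prettyref{eq:igj2} but with $(\mathfrak{g},\mathfrak{h},\mathfrak{j})$ replaced by $(\mzq g,\mzq h,\mzq j)$ — i.e. $\ZR{\mzq h}{\mzq j}\!\cong\!\fact{\ZR{\mzq g}{\mzq j}}{\ZR{\mzq g}{\mzq h}}$ — combined with the fact, from \prettyref{thm:resim}, that $\ZR{\mzq g}{\mzq h}\!=\!\boldsymbol{\partial}H$-type data vanishes appropriately; so the real content is matching $\IR{\hh}{\jj}$ with $\ZR{\mzq h}{\mzq j}$ and $\IR g{\jj}$ with $\ZR{\mzq g}{\mzq j}$ through \prettyref{eq:fund}. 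Once these identifications are laid out explicitly, both isomorphisms drop out with no computation beyond the isomorphism theorems already quoted.
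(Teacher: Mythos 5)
Your overall skeleton is the paper's: establish the single inclusion $\ZR jh\!<\!\IR gj$ and then specialize \prettyref{eq:igj} and \prettyref{eq:zgjzgh} of \prettyref{lem:compcor}. But the step by which you derive that inclusion fails as written. First, $\mathfrak{h}\!\subseteq\!\mathfrak{j}$ alone does not give $\mzq j\!\subseteq\!\mathfrak{h}$; what is true here is $\mzq j\!\subseteq\!\mzq g\!\subseteq\!\mathfrak{h}$, using order-preservation of the maximal central quotient together with the hypothesis $\ext hg$. More seriously, after deducing $\zcl\!\subseteq\!\ZD j$ you identify $\ZR j{\mzq j}$ with $\IR gj$. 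That identity is false in general: $\ZR j{\mzq j}\!=\!\zent j$ (every central $\mathfrak{j}$-class lies in $\ZD j$ by \prettyref{eq:usubzent}), whereas $\IR gj\!=\!\ZR j{\mzq g}$ by \prettyref{eq:fund}, and the two coincide only when $\mzq j\!=\!\mzq g$; indeed your own first step records the correct identity $\IR gj\!=\!\ZR j{\mzq g}$, so the argument lands in a vacuous target. The repair is short and is what the paper does: $\ext hg$ gives $\mzq g\!\subseteq\!\mathfrak{h}$, hence $\du h\!\subseteq\!\ZD g$, so $\zcl\!\in\!\ZR jh$ implies $\zcl\!\subseteq\!\ZD g$, i.e. $\zcl\!\in\!\ZR j{\mzq g}\!=\!\IR gj$ (equivalently, $\ZR jh\!<\!\ZR j{\mzq g}$ by \prettyref{lem:comp}). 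Your hedged parenthetical --- every central $\mathfrak{j}$-class inside $\du h\!=\!\usub{\ZR gh}$ contains a central $\mathfrak{g}$-class and hence lies in the image --- is a correct variant, but you do not commit to it.

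The second half has a similar problem. Collapsing $\IR gj\ZR jh\!=\!\IR gj$ in \prettyref{eq:zgjzgh} is the right move, but identifying $\fact{\zent j\!}{\IR gj}$ with $\zent{\mzq h}$ is not: the correct chain is $\fact{\zent j\!}{\IR gj}\!=\!\fact{\zent j\!}{\ZR j{\mzq g}}\!\cong\!\IR j{\mzq g}\!=\!\ZR{\mzq g}{\mzq j}$, by \prettyref{eq:seq1} and \prettyref{thm:resim} applied to $\ext{\mzq g}j$. Your fallback route --- \prettyref{eq:igj2} applied to the triple $(\mzq g,\mzq h,\mzq j)$ --- is legitimate but proves $\ZR{\mzq j}{\mzq h}\!\cong\!\fact{\ZR{\mzq g}{\mzq h}\!}{\ZR{\mzq g}{\mzq j}}$, which is not \prettyref{eq:ijh}; and the proposed matchings of $\IR jh$ with $\ZR{\mzq h}{\mzq j}$ and of $\IR gj$ with $\ZR{\mzq g}{\mzq j}$ are not instances of \prettyref{eq:fund}, which puts the maximal central quotient in the second slot ($\IR gj\!=\!\ZR j{\mzq g}$). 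Finally, note that your use of the symbols $\mathfrak{h}$ and $\mathfrak{j}$ is swapped relative to the paper's conventions in several formulas (e.g. $\ZR jh$ and $\IR gj$ are subgroups of $\zent j$, not of $\zent h$), which is very likely the source of these mix-ups; with the inclusion proved as above, both isomorphisms do follow from \prettyref{lem:compcor} exactly as you intend.
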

\begin{proof}
$\ext hg$ and $\mathfrak{h}\!\subseteq\!\mathfrak{j}\!\subseteq\!\mathfrak{g}$
imply $\ext jg$, hence $\ext{\mzq g}j$, consequently $\IR gj\!=\!\ZR j{\mzq g}$
and $\IR j{\mzq g}\!=\!\ZR{\mzq g}{\mzq j}$ according to \prettyref{thm:resim}.
On the other hand, $\mzq g\!\subseteq\!\mathfrak{h}\!\subseteq\!\mathfrak{j}$
gives $\ZR jh\!<\!\ZR j{\mzq g}$ according to \prettyref{lem:comp},
hence $\ZR jh\!<\!\ZR j{\mzq g}\!=\!\IR gj$ by the above. But $\ZR jh\!<\!\IR gj$
implies  $\IR g{\hh}\!\cap\!\ZR{\hh}{\jj}\!=\!\ZR{\hh}{\jj}$ and
$\IR g{\hh}\!\ZR{\hh}{\jj}\!=\!\IR g{\hh}$, and taking this into
account, \prettyref{eq:igj} reduces to \prettyref{eq:igj2}, while
\prettyref{eq:zgjzgh} leads to 
\[
\fact{\IR{\hh}{\jj}\!}{\IR g{\jj}}\!\cong\!\fact{\zent j\!}{\IR g{\hh}}\!=\!\fact{\zent j\!}{\ZR j{\mzq g}}\!\cong\!\IR j{\mzq g}\!=\!\ZR{\mzq g}{\mzq j}
\]
in view of \prettyref{thm:resim} and \prettyref{eq:seq1}, proving
\prettyref{eq:ijh}.
\end{proof}
For subgroups $H_{1}\!<\!H_{2}\!<\!\zent g$, \prettyref{eq:igj2}
with the choice $\mathfrak{j}\!=\!\fact{\mathfrak{g}}{\mathit{\!H_{\mathrm{1}}}}$
and $\mathfrak{h}\!=\!\fact{\mathfrak{g}}{\!\mathit{H_{\mathrm{2}}}}$
gives us an isomorphism
\begin{equation}
\ZR{\fact g{\mathit{\!H_{\mathrm{1}}}}}{\fact g{\!\mathit{H_{\mathrm{2}}}}}\!\cong\!\fact{H_{2}}{\!H_{1}}\label{eq:quotZ}
\end{equation}
while combining \prettyref{eq:ijh} with \prettyref{eq:seq1A} leads
to the exact sequence
\begin{equation}
\minCDarrowwidth23pt\begin{CD}1@>>>\fact{\zent g\!}{\!H_{2}}@>>>\IR{\fact g{\mathit{\!H_{\mathrm{1}}}}}{\fact g{\!\mathit{H_{\mathrm{2}}}}}@>>>\boldsymbol{\partial}H_{1}@>>>1\end{CD}\label{eq:4seqjh}
\end{equation}

To formulate our next result, let's introduce the notation $\inn gh\!=\!\set{\mathfrak{j}\!\in\!\lat}{\mathfrak{h}\!\subseteq\!\mathfrak{j}\!\subseteq\!\mathfrak{g}}$
for $\mathfrak{g},\mathfrak{h}\in\lat$; notice that $\inn gh$ is
empty unless $\mathfrak{h}\!\subseteq\!\mathfrak{g}$. In particular,
$\inn g{\mzq g}$ equals the collection of central quotients, and
$\inn{\cov g}g$ that of central extensions of $\mathfrak{g}$ according
to the comments following \prettyref{eq:usubzent}. As a byproduct
of \prettyref{thm:resim}, one gets an interesting extension of the
Galois correspondence that has been described in \prettyref{sec:Central-quotients-and},
according to which there is, in case $\mathfrak{h}$ is a central
quotient of $\mathfrak{g}$, a one-to-one correspondence between the
subgroups of $\ZR gh$ and those central quotients of $\mathfrak{g}$
that contain $\mathfrak{h}$. But $\ext hg$ implies $\ext{\mzq g}h$,
i.e. $\mzq g$ is a central quotient of $\mathfrak{h}$, hence there
is a one-to-one correspondence between the subgroups of $\ZR h{\mzq g}\!=\!\IR gh$
and those central quotients of $\mathfrak{h}$ that contain $\mathfrak{\mzq g}$.
Since these are nothing but the central quotients of $\mathfrak{g}$
contained in $\mathfrak{h}$, in the end we get a one-to-one correspondence
between the latter and the subgroups of $\IR gh$. For later reference,
let's formulate this result as a Lemma.
\begin{lem}
\label{lem:galcor}In case $\ext hg$, there are one-to-one correspondences
between $\inn gh$ and subgroups of $\ZR gh$ on one hand, and between
$\inn h{\mzq g}$ and subgroups of $\IR gh$ on the other.
\end{lem}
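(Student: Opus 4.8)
The plan is to derive both correspondences from the Galois correspondence for central quotients recalled in \prettyref{sec:Central-quotients-and}, using \prettyref{thm:resim} only to name the groups involved. Recall that statement: if $\mathfrak{h}\!=\!\zquot gH$ for a subgroup $H\!<\!\zent g$, then the lattice elements $\mathfrak{j}$ with $\mathfrak{h}\!\subseteq\!\mathfrak{j}\!\subseteq\!\mathfrak{g}$ are in bijection with the subgroups of $H$, the bijection sending $K\!<\!H$ to $\zquot gK$ and, conversely, $\mathfrak{j}$ to $\ZR gj$. Now $\ext hg$ means precisely that $\mathfrak{h}$ is a central quotient of $\mathfrak{g}$, so $\mathfrak{h}\!=\!\zquot gH$ for some $H$, and by \prettyref{thm:resim} (the identity $\mathfrak{h}\!=\!\fact{\mathfrak{g}}{\ZR gh}$) this $H$ is $\ZR gh$. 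Moreover $\inn gh$ coincides with the set $\set{\mathfrak{j}}{\mathfrak{h}\!\subseteq\!\mathfrak{j}\!\subseteq\!\mathfrak{g}}$ occurring above, because every such $\mathfrak{j}$ satisfies $\mzq g\!\subseteq\!\mathfrak{h}\!\subseteq\!\mathfrak{j}\!\subseteq\!\mathfrak{g}$ and is therefore itself a central quotient of $\mathfrak{g}$. This yields the first bijection, between $\inn gh$ and the subgroups of $\ZR gh$.

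For the second bijection I would invoke the remark at the start of \prettyref{sec:Galois} that $\ext hg$ implies $\ext{\mzq g}h$, so $\mzq g$ is a central quotient of $\mathfrak{h}$. Applying the first bijection with $(\mathfrak{h},\mzq g)$ in place of $(\mathfrak{g},\mathfrak{h})$ gives a one-to-one correspondence between $\inn h{\mzq g}$ and the subgroups of $\ZR h{\mzq g}$; it then remains only to quote \prettyref{eq:fund} of \prettyref{thm:resim}, namely $\ZR h{\mzq g}\!=\!\IR gh$, to arrive at the asserted correspondence between $\inn h{\mzq g}$ and the subgroups of $\IR gh$. One may observe in passing that $\inn h{\mzq g}$ is exactly the set of central quotients of $\mathfrak{g}$ contained in $\mathfrak{h}$, since $\mzq g\!\subseteq\!\mathfrak{j}\!\subseteq\!\mathfrak{h}$ forces $\mzq g\!\subseteq\!\mathfrak{j}\!\subseteq\!\mathfrak{g}$.

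Since both halves are essentially a rephrasing of results already in place, I do not expect a genuine obstacle. The two points that do need a moment's attention are: checking that $\inn gh$ really is the full set of intermediate central quotients — which is precisely the assertion in \prettyref{sec:Central-quotients-and} that every element between $\mzq g$ and $\mathfrak{g}$ is a central quotient — and using the hypothesis $\ext hg$ twice over, first to write $\mathfrak{h}\!=\!\zquot gH$ and then, through $\ext{\mzq g}h$, to make the second application of the correspondence legitimate. If more were wanted, one could also note via \prettyref{lem:comp} that both bijections reverse inclusion, but that is not part of the statement.
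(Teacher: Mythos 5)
Your proposal is correct and follows essentially the same route as the paper: it applies the Galois correspondence of \prettyref{sec:Central-quotients-and} twice, first to $\ext hg$ with $H\!=\!\ZR gh$ (identified via \prettyref{thm:resim}) to get the bijection for $\inn gh$, and then, using that $\ext hg$ implies $\ext{\mzq g}h$, to the pair $(\mathfrak{h},\mzq g)$ together with \prettyref{eq:fund} to get the bijection for $\inn h{\mzq g}$ and subgroups of $\IR gh$. No gaps.
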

Combined with the modularity of the lattice $\lat$, \prettyref{lem:galcor}
leads to important results, notably the following strengthening of
\prettyref{lem:compZR}.

\global\long\def\jj{\mathfrak{h}_{\mathrm{1}}}%
\global\long\def\hh{\mathfrak{h}_{\mathrm{2}}}%

\begin{lem}
\label{lem:compZR*} $\ZR g{\jj}\ZR g{\hh}\!=\!\ZR g{\jj\!\cap\!\hh}$
if both $\jj$ and $\hh$ are central quotients of $\mathfrak{g}$.
\end{lem}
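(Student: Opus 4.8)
The inclusion $\ZR g{\jj}\ZR g{\hh}\!\subseteq\!\ZR g{\jj\!\cap\!\hh}$ is already furnished by \prettyref{lem:compZR}, so the whole task is to produce the reverse containment, and the plan is to reduce it to a purely formal manipulation of the Galois correspondence recalled in \prettyref{sec:Central-quotients-and}. The fact I would isolate at the outset is that, for a fixed $\mathfrak{g}\!\in\!\lat$, the assignments $H\!\mapsto\!\zquot gH$ and $\mathfrak{h}\!\mapsto\!\ZR gh$ are mutually inverse, \emph{order-reversing} bijections between the subgroups of $\zent g$ and the central quotients of $\mathfrak{g}$. That they are mutually inverse and that their common range is the set of \emph{all} central quotients is exactly the correspondence of \prettyref{sec:Central-quotients-and} (equivalently \prettyref{lem:galcor} taken with $\mathfrak{h}\!=\!\mzq g$, so that $\ZR g{\mzq g}\!=\!\zent g$); that $H\!\mapsto\!\zquot gH$ reverses order follows from $\usub{H_{1}}\!\subseteq\!\usub{H_{2}}$ whenever $H_{1}\!\subseteq\!H_{2}$, together with the identity $\usub H\!=\!\du{\left(\zquot gH\right)}$ and the order-reversing, involutive nature of the duality map; and that $\mathfrak{h}\!\mapsto\!\ZR gh$ reverses order is the first assertion of \prettyref{lem:comp}.

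Granting this, the argument proceeds in three short steps. First I would check that $\jj\!\cap\!\hh$ is again a central quotient of $\mathfrak{g}$: since both $\jj$ and $\hh$ contain $\mzq g$ one has $\mzq g\!\subseteq\!\jj\!\cap\!\hh\!\subseteq\!\mathfrak{g}$, and every element of $\lat$ in this range is a central quotient of $\mathfrak{g}$ by \prettyref{sec:Central-quotients-and}. Next I would set $K\!=\!\ZR g{\jj}\ZR g{\hh}$, which is a subgroup of $\zent g$ (a genuine subgroup precisely because $\zent g$ is abelian) containing each of $\ZR g{\jj}$ and $\ZR g{\hh}$; applying the order-reversing map $H\!\mapsto\!\zquot gH$ to the inclusions $\ZR g{\jj}\!\subseteq\!K$ and $\ZR g{\hh}\!\subseteq\!K$ gives $\zquot gK\!\subseteq\!\zquot g{\ZR g{\jj}}\!=\!\jj$ and $\zquot gK\!\subseteq\!\hh$, hence $\zquot gK\!\subseteq\!\jj\!\cap\!\hh$. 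Finally, applying the order-reversing map $\mathfrak{h}\!\mapsto\!\ZR gh$ to the inclusion $\zquot gK\!\subseteq\!\jj\!\cap\!\hh$ and using $\ZR g{\zquot gK}\!=\!K$ yields $\ZR g{\jj\!\cap\!\hh}\!\subseteq\!K\!=\!\ZR g{\jj}\ZR g{\hh}$, which combined with \prettyref{lem:compZR} is the asserted equality.

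The only point calling for genuine care is the preliminary observation: one must make sure that the two correspondences really are mutually inverse order-reversing bijections onto the full set of central quotients, since once this is pinned down the remaining steps are mechanical. Equivalently, the observation can be recorded as the statement that $H\!\mapsto\!\zquot gH$ is a lattice anti-isomorphism from the subgroup lattice of $\zent g$ onto the interval $\inn g{\mzq g}$ of $\lat$; under such an anti-isomorphism the join $\ZR g{\jj}\ZR g{\hh}$ of subgroups is automatically transported to the meet $\jj\!\cap\!\hh$ of the corresponding central quotients, which is precisely the claim. This is also the natural vantage point from which the modularity of $\lat$ flagged at the start of this section makes contact with the present circle of ideas, since $\inn g{\mzq g}$ is thereby seen to carry the same modular structure as the subgroup lattice it mirrors.
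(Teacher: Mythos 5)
Your proof is correct, but it follows a genuinely different route from the paper's. The paper argues by contradiction and counting: if $\ZR g{h_{1}}\ZR g{h_{2}}$ were a proper subgroup of $\ZR g{h_{1}\!\cap\!h_{2}}$, then $\fact{\ZR g{h_{1}}\ZR g{h_{2}}}{\ZR g{h_{1}}}\cong\ZR{h_{1}\!\vee\!h_{2}}{h_{2}}$ would have fewer subgroups than $\fact{\ZR g{h_{1}\cap h_{2}}}{\ZR g{h_{1}}}\cong\ZR{h_{1}}{h_{1}\!\cap\!h_{2}}$ (via \prettyref{eq:1stiso}, \prettyref{lem:compZR} and \prettyref{lem:quotincl}), contradicting the equal cardinalities of $\inn{h_{1}\vee h_{2}}{h_{2}}$ and $\inn{h_{1}}{h_{1}\cap h_{2}}$ supplied by \prettyref{lem:galcor} together with the diamond isomorphism theorem --- this is exactly where the modularity of $\lat$ enters. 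You instead stay with the single correspondence attached to $\mathfrak{g}$, using only that $H\mapsto\zquot gH$ and $\mathfrak{j}\mapsto\ZR gj$ are mutually inverse order-reversing bijections between subgroups of $\zent g$ and $\inn g{\mzq g}$, so that the subgroup join is forced onto the meet of the corresponding quotients. The two points you flag as needing care are indeed available without modularity: the order-reversal of $H\mapsto\zquot gH$ follows, as you say, from $\usub H\!=\!\du{\left(\zquot gH\right)}$ and the antitone involution $\mathfrak{j}\mapsto\du j$, while the identity $\ZR g{\zquot gH}\!=\!H$ for an arbitrary subgroup $H$ (slightly stronger than the bare bijection statement of \prettyref{lem:galcor}, but exactly what the paper itself uses when introducing $\boldsymbol{\partial}H$ and before \prettyref{eq:latisoma}) follows from \prettyref{thm:resim} combined with the surjectivity of $H\mapsto\zquot gH$ onto $\inn g{\mzq g}$ and the equality of the finite cardinalities. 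The trade-off: the paper's proof is the one that exhibits modularity at work, whereas yours is more economical --- no counting, no isomorphism theorems --- and shows in effect that \prettyref{lem:compZR*}, and hence \prettyref{eq:latisomb}, already follow from \prettyref{thm:resim}, \prettyref{lem:comp} and the duality map, without appeal to \prettyref{thm:modularity}.
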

\begin{proof}
Indeed, the inclusions $\mzq g\!\subseteq\!\mathfrak{\jj},\hh\!\subseteq\!\mathfrak{g}$
imply $\mzq{\jj}\!\subseteq\!\mzq g\!\subseteq\!\jj\!\cap\!\hh\!\subseteq\!\jj$
and $\mzq{\left(\jj\mathrm{\vee\,}\hh\right)}\!\subseteq\!\mzq g\!\subseteq\!\hh\!\subseteq\!\jj\mathrm{\vee\,}\hh$,
hence $\ext{\jj\!\cap\!\hh}{\jj}$ and $\ext{\hh}{\jj\mathrm{\vee\,}\hh}$,
so there is, by \prettyref{lem:galcor}, a one-to-one correspondence
between $\inn{\jj\mathrm{\vee\,}\hh}{\hh}$ and subgroups of $\ZR{\jj\mathrm{\vee\,}\hh}{\hh}$
on one hand, and between $\inn{\jj}{\jj\!\cap\!\hh}$ and subgroups
of $\ZR{\jj}{\jj\!\cap\!\hh}$ on the other. $\ZR g{\jj}\ZR g{\hh}$
is a subgroup of $\ZR g{\jj\!\cap\!\hh}$ by \prettyref{lem:compZR}:
should it be a proper subgroup, the factor group $\fact{\ZR g{\jj}\ZR g{\hh}\!}{\ZR g{\jj}}$,
which is isomorphic to $\fact{\ZR g{\hh}\!}{\ZR g{\jj\mathrm{\vee\,}\hh}}\!\cong\!\ZR{\jj\mathrm{\vee\,}\hh}{\hh}$
according to \prettyref{eq:1stiso} and \prettyref{lem:quotincl},
would have less subgroups than the factor group $\fact{\ZR g{\jj\!\cap\!\hh}\!}{\ZR g{\jj}\!\cong\!\ZR{\jj}{\jj\!\cap\!\hh}}$,
contradicting the one-to-one correspondence between $\inn{\jj\mathrm{\vee\,}\hh}{\hh}$
and $\inn{\jj}{\jj\!\cap\!\hh}$ that follows from the modularity
of $\lat$.
\end{proof}
The above result has the following important consequence: consider
subgroups $H_{1},H_{2}\!<\!\zent g$, and let $\jj\!=\!\fact{\mathfrak{g}}{\!H_{1}}$
and $\hh\!=\!\fact{\mathfrak{g}}{\!H_{2}}$, so that $H_{1}\!=\!\ZR g{\jj}$
and $H_{2}\!=\!\ZR g{\hh}$. Since $H_{1}\!\cap\!H_{2}\!=\!\ZR g{\jj\!\vee\!\hh}$
by \prettyref{lem:compZR}, while $H_{1}H_{2}\!=\!\ZR g{\jj\!\cap\!\hh}$
according to \prettyref{lem:compZR*}, one has by \prettyref{thm:resim}
\begin{align}
\left(\zquot g{H_{1}}\right)\!\vee\!\left(\zquot g{H_{2}}\right)\! & =\!\zquot g{\left(H_{1}\!\cap\!H_{2}\right)}\label{eq:latisoma}\\
\shortintertext{and}\left(\zquot g{H_{1}}\right)\!\cap\!\left(\zquot g{H_{2}}\right) & \!=\!\zquot g{\left(H_{1}H_{2}\right)}\label{eq:latisomb}
\end{align}
which means that the correspondences of \prettyref{lem:galcor} are
actually lattice isomorphisms.

\global\long\def\jj{\mathfrak{h}}%
\global\long\def\hh{\mathfrak{j}}%

In the same vein, one gets the following far-reaching generalization
of \prettyref{thm:resim}.
\begin{thm}
\label{thm:igh}$\fact{\mathfrak{g}}{\ZR gh}\!=\!\mathfrak{h}\!\vee\!\mzq g$
and $\fact{\mathfrak{h}}{\IR gh}\!=\!\mathfrak{h}\!\cap\!\mzq g$
for $\mathfrak{h}\!\subseteq\!\mathfrak{g}$, and in particular
\begin{equation}
\IR gh\!=\!\ZR h{h\!\cap\!\mzq g}\label{eq:fund*}
\end{equation}
\end{thm}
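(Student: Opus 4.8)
The plan is to bootstrap both assertions from \prettyref{thm:resim} (the case $\ext hg$) by inserting between $\mathfrak{h}$ and $\mathfrak{g}$ the element $\mathfrak{j}\!=\!\mathfrak{h}\!\vee\!\mzq g$, which satisfies $\mzq g\!\subseteq\!\mathfrak{j}\!\subseteq\!\mathfrak{g}$, i.e. $\ext jg$. For the first identity this is painless: \prettyref{thm:resim} gives $\mathfrak{j}\!=\!\fact{\mathfrak{g}}{\ZR gj}$, while \prettyref{lem:compZR} gives $\ZR gj\!=\!\ZR g{h\vee\mzq g}\!=\!\ZR gh$, so $\fact{\mathfrak{g}}{\ZR gh}\!=\!\mathfrak{h}\!\vee\!\mzq g$.

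For the $\IR gh$ side I would first record that $\mathfrak{h}\!\cap\!\mzq g$ is a central quotient of $\mathfrak{h}$: the map $\mathfrak{g}\!\mapsto\!\mzq g$ is order-preserving, so $\mzq h\!\subseteq\!\mzq g$, whence $\mzq h\!\subseteq\!\mathfrak{h}\!\cap\!\mzq g\!\subseteq\!\mathfrak{h}$. Consequently $\ZR h{h\cap\mzq g}$ is exactly the subgroup of $\zent h$ with $\fact{\mathfrak{h}}{\ZR h{h\cap\mzq g}}\!=\!\mathfrak{h}\!\cap\!\mzq g$ and $\usub{\ZR h{h\cap\mzq g}}\!=\!\du{(\mathfrak{h}\!\cap\!\mzq g)}$, so it suffices to prove $\IR gh\!=\!\ZR h{h\cap\mzq g}$. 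One inclusion is immediate: applying \prettyref{lem:comp} to the chain $\mathfrak{h}\!\cap\!\mzq g\!\subseteq\!\mathfrak{h}\!\subseteq\!\mathfrak{g}$ gives $\rex g{\mathfrak{h}}\!\bigl(\ZR g{h\cap\mzq g}\bigr)\!=\!\IR gh\!\cap\!\ZR h{h\cap\mzq g}$, and since $\ZR g{h\cap\mzq g}\!=\!\zent g$ by \prettyref{lem:compZR}, the left-hand side is all of $\IR gh$, so $\IR gh\!\subseteq\!\ZR h{h\cap\mzq g}$.

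The reverse inclusion is the crux, and the key observation is that $\usub{\IR gh}$ — the union, as a set of primaries, of the central $\mathfrak{h}$-classes in the image of $\rex g{\mathfrak{h}}$ — is itself an element of $\lat$, namely $\du{(\fact{\mathfrak{h}}{\IR gh})}$, hence in particular fusion closed. Now $\du h$, being the image under $\rex g{\mathfrak{h}}$ of the trivial $\mathfrak{g}$-class (the identity of $\zent g$), lies in $\IR gh$, and every $\zcl\!\in\!\zent g$ satisfies $\zcl\!\subseteq\!\rex g{\mathfrak{h}}(\zcl)\!\in\!\IR gh$; therefore $\usub{\IR gh}$ contains both $\du h$ and $\usub{\zent g}\!=\!\du{(\mzq g)}$ (by \prettyref{eq:usubzent}), and being fusion closed it must contain their join $\du h\!\vee\!\du{(\mzq g)}$, which equals $\du{(\mathfrak{h}\!\cap\!\mzq g)}$ because the duality map is an order-reversing involution. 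Combined with the previous paragraph — which gives $\usub{\IR gh}\!\subseteq\!\usub{\ZR h{h\cap\mzq g}}\!=\!\du{(\mathfrak{h}\!\cap\!\mzq g)}$ — this forces $\usub{\IR gh}\!=\!\du{(\mathfrak{h}\!\cap\!\mzq g)}\!=\!\usub{\ZR h{h\cap\mzq g}}$, and since a subgroup of $\zent h$ is determined by the union of its (pairwise disjoint) classes, $\IR gh\!=\!\ZR h{h\cap\mzq g}$. The two displayed identities of the theorem then follow from the preceding paragraphs, and \prettyref{thm:resim} is recovered when $\ext hg$, for then $\mathfrak{h}\!\cap\!\mzq g\!=\!\mzq g$.

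The only real obstacle is that last inclusion; the non-obvious point that makes it work is recognizing that $\usub{\IR gh}$ is automatically fusion closed, so that the two ``generators'' $\du h$ and $\du{(\mzq g)}$ drag in the entire join $\du{(\mathfrak{h}\!\cap\!\mzq g)}$. One could instead close the gap by a counting argument, comparing $\FA{\IR gh}$ with $\FA{\ZR h{h\cap\mzq g}}$ through \prettyref{eq:seq1} and Lagrange, but the fusion-closedness route is shorter and avoids bookkeeping with $\FA{\zent h}$.
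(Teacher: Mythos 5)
Your proof is correct, but it reaches the key identity \prettyref{eq:fund*} by a genuinely different route than the paper. The inclusion you get cheaply — $\IR gh\subseteq\ZR h{h\cap\mzq g}$, via \prettyref{lem:comp} applied to the chain $\mathfrak{h}\cap\mzq g\subseteq\mathfrak{h}\subseteq\mathfrak{g}$ together with $\ZR g{h\cap\mzq g}=\zent g$ from \prettyref{lem:compZR} — is precisely the direction the paper does \emph{not} prove directly: there it emerges only at the end, from a subgroup-counting argument that relies on \prettyref{lem:galcor} and the modularity of $\lat$ (the diamond isomorphism between $\inn{\mathfrak{h}}{\mathfrak{h}\cap\mzq g}$ and $\inn{\mathfrak{h}\vee\mzq g}{\mzq g}$). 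Your other inclusion is essentially the paper's easy step in disguise: the paper sets $\mathfrak{j}=\fact{\mathfrak{h}}{\IR gh}$ and shows $\mathfrak{j}\subseteq\mathfrak{h}\cap\mzq g$ separately from $\mathfrak{j}\subseteq\mathfrak{h}$ and $\mathfrak{j}\subseteq\mzq g$, which is the same content as your observation that $\usub{\IR gh}=\du{\mathfrak{j}}$ is an element of $\lat$ containing $\du h$ and $\du{\left(\mzq g\right)}$, hence their join $\du{\left(\mathfrak{h}\cap\mzq g\right)}$. The net effect is that your argument proves \prettyref{thm:igh} without invoking \prettyref{lem:galcor}, \prettyref{lem:compZR*} or the modularity of $\lat$ established in the Appendix, so it is more elementary and shows the theorem does not actually depend on modularity; what the paper's route buys instead is that it runs through the Galois-correspondence and modularity machinery it needs anyway for \prettyref{lem:compZR*} and the lattice isomorphisms \prettyref{eq:latisoma}--\prettyref{eq:latisomb}. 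Your closing aside about closing the gap by comparing orders via \prettyref{eq:seq1} and Lagrange is the one shaky point — without modularity there is no obvious handle on $\FA{\ZR h{h\cap\mzq g}}$ — but since you do not use it, it does not affect the proof.
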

\begin{proof}
As $\mathfrak{h}\!\subseteq\!\mathfrak{g}$ implies both $\mzq g\!\subseteq\!\mathfrak{h}\!\vee\!\mzq g\!\subseteq\!\mathfrak{g}$
and $\mzq h\!\subseteq\!\mathfrak{h}\!\cap\!\mzq g\!\subseteq\!\mathfrak{h}$,
we have $\fact{\mathfrak{g}}{\ZR g{h\!\vee\!\mzq g}}\!=\!\mathfrak{h}\!\vee\!\mzq g$
and $\fact{\mathfrak{h}}{\ZR h{h\!\cap\!\mzq g}}\!=\!\mathfrak{h}\!\cap\!\mzq g$
according to \prettyref{thm:resim}. But $\ZR g{h\!\vee\!\mzq g}\!=\!\ZR gh$
by \prettyref{lem:compZR}, proving the first assertion, while the
second one would follow from the equality $\IR gh\!=\!\ZR h{h\!\cap\!\mzq g}$.
To prove the latter, consider $\mathfrak{j}\!=\!\fact{\mathfrak{h}}{\IR gh}$;
we claim that $\mathfrak{j}\!\subseteq\!\mathfrak{h}\!\cap\!\mzq g$.
The inclusion $\mathfrak{j}\!\subseteq\!\mathfrak{h}$ is obvious,
while $\mathfrak{j}\!\subseteq\!\mzq g$ follows from the observation
that, because every central $\mathfrak{g}$-class is contained in
a central $\mathfrak{j}$-class, the union $\usub{\zent g}\!=\!\du{\left(\mzq g\right)}$
of all central $\mathfrak{g}$-classes should be contained in the
union $\usub{\IR gj}\!=\!\du j$ of all those central $\mathfrak{j}$-classes
that contain at least one central $\mathfrak{g}$-class, hence $\du{\left(\mzq g\right)}\!\subseteq\!\du j$
or, what is the same, $\mathfrak{j}\!\subseteq\!\mzq g$. But $\mathfrak{j}\!\subseteq\!\mathfrak{h}\!\cap\!\mzq g$
means that $\fact{\mathfrak{h}}{\IR gh}\!\subseteq\!\fact{\mathfrak{h}}{\ZR h{h\!\cap\!\mzq g}}$,
hence $\ZR h{h\!\cap\!\mzq g}$ is a subgroup of $\IR gh$ by \prettyref{lem:galcor}.
We claim that they are actually equal.

According to \prettyref{lem:galcor}, there is a one-to-one correspondence
between subgroups of $\IR g{\mathfrak{h}\!\vee\!\mzq g}$ and $\inn{\mathfrak{h}\!\vee\!\mzq g}{\mzq g}$
on one hand, and between subgroups of $\ZR h{h\!\cap\!\mzq g}$ and
$\inn{\mathfrak{h}}{\mathfrak{h}\!\cap\!\mzq g}$ on the other. But
$\IR g{\mathfrak{h}\!\vee\!\mzq g}\!\cong\!\fact{\zent g\!}{\ZR g{h\!\vee\!\mzq g}}\!=\!\fact{\zent g\!}{\ZR gh}\!\cong\!\IR gh$
by \prettyref{eq:seq1}: should $\ZR h{h\!\cap\!\mzq g}$ be a proper
subgroup of $\IR gh$, it would have less subgroups, contradicting
the one-to-one correspondence between $\inn{\mathfrak{h}}{\mathfrak{h}\!\cap\!\mzq g}$
and $\inn{\mathfrak{h}\!\vee\!\mzq g}{\mzq g}$ that follows from
the modularity of the deconstruction lattice.
\end{proof}
\begin{singlespace}
\begin{center}
\usetikzlibrary {shapes.geometric,shapes.symbols}
\begin{tikzpicture}[auto,node distance=1.4cm,
every loop/.style={}, 
main node/.style={},
null/.style={coordinate}]
\node[main node] (1) {$\mathfrak{g}$}; 
\node[main node] (2)[below of=1]{$\quad\thinspace\thinspace\mathfrak{h}\negthinspace\vee\negthinspace\mzq{g}\negthinspace= \fact{\mathfrak{g}}{\ZR gh}$}; 
\node[null] (3)  [below  of=2] {};
\node[main node] (4) [below right of=2] {$\thinspace\mzq g$};  
\node[null] (5)  [below left of=4] {};
\node[main node] (6) [left of=5] {$\mathfrak{h}$}; 
\node[null] (7)  [below right of=6] {}; 
\node[main node] (8) [below right  of=6] {$\quad\mathfrak{h}\negthinspace\cap\negthinspace\mzq{g}\negthinspace= \fact{\mathfrak{h}}{\IR gh}$};
\node[main node] (9) [below  of=8] {$\mzq h$\quad};
\path[every node/.style={font=\sffamily\small}]        
(1) edge node {} (2) 
(2) edge node {} (6)  
edge node {} (4)
(6) edge node {} (8)
(4) edge node {} (8)
(8) edge node {} (9)
; 
\end{tikzpicture}
\par\end{center}

\begin{center}
\medskip{}
Fig.3: Inclusion relations relevant to \prettyref{thm:igh}.
\par\end{center}
\end{singlespace}

\medskip{}

\section{A long exact sequence\label{sec:Long-exact-sequences}}

An interesting addendum to the previous results follows from the observation
that, thanks to \prettyref{eq:fund}, the short exact sequences \prettyref{eq:seq1A}
and \prettyref{eq:seq1B} can be combined for $\ext hg$ into a four-term
exact sequence, and the latter leads, by means of a recursive process,
to a long exact sequence connecting the centers of higher central
quotients. 

Too see how this come about, remember that $\IR gh\!=\!\ZR h{\mzq g}$
for $\ext hg$ by \prettyref{thm:resim}, hence the short exact sequence
of \prettyref{eq:seq1A} reads in this case 
\begin{equation}
\begin{CD}1@>>>\ZR gh@>>>\zent g@>>>\ZR h{\mzq g}@>>>1\end{CD}\label{eq:seq2}
\end{equation}
Next, since $\ext hg$ implies $\ext{\mzq g}h$, we can substitute
$\mzq g$ for $\mathfrak{h}$, and simultaneously $\mathfrak{h}$
for $\mathfrak{g}$ in \prettyref{eq:seq2} to yield
\begin{equation}
\begin{CD}1@>>>\ZR h{\mzq g}@>>>\zent h@>>>\ZR{\mzq g}{\mzq h}@>>>1\end{CD}\label{eq:seq3}
\end{equation}
and combining this last sequence with \prettyref{eq:seq2} gives the
four-term exact sequence
\begin{equation}
\minCDarrowwidth29pt\begin{CD}~~~~~~1@>>>\ZR gh@>>>\zent g@>>>\zent h@>>>\ZR{\mzq g}{\mzq h}@>>>1\end{CD}\label{eq:seq4}
\end{equation}

But there is no reason to stop here, as $\ext hg$ implies $\ext{\mzn{\mathit{k}}h}{\mzn{\mathit{k}}g}$
for any $k\!\geq\!1$, so we can substitute $\mzn{\mathit{k}}g$ for
$\mathfrak{g}$ and $\mathfrak{\mzn{\mathit{k}}h}$ for $\mathfrak{h}$
in \prettyref{eq:seq4} to obtain an exact sequence
\[
\minCDarrowwidth14pt\begin{CD}1@>>>\ZR{\mzn{\mathit{k}}g}{\mzn{\mathit{k}}h}@>>>\zent{\mzn{\mathit{k}}g}@>>>\zent{\mzn{\mathit{k}}h}@>>>\ZR{\mzn{\mathit{k}+1}g}{\mzn{\mathit{k}+1}h}@>>>1\end{CD}
\]
Combining \prettyref{eq:seq4} with the above sequences for $k\!=\!1,\ldots,n$
leads finally to the long exact sequence
\begin{flushleft}
~~~~~~~\begin{tikzcd}[sep=small]1\arrow[r]&\ZR gh\arrow{r}&\zent{g}\arrow{r}\arrow[d,phantom,""{coordinate,name=Z}]&\zent{h}\arrow{r}&\zent{\mzq g}\arrow{r}&\zent{\mzq h}
\arrow[dllll,rounded corners,to path={--([xshift=1ex]\tikztostart.east)|-(Z)[near end]\tikztonodes-|([xshift=-4ex]\tikztotarget.west)--(\tikztotarget)}]\\ &\cdots\arrow[r]&\zent{\mzn {\mathit{n}}g}\arrow{r}&\zent{\mzn {\mathit{n}}h}\arrow[r]&\ZR{\mzn {\mathit{n}+1}g}{\mzn {\mathit{n}+1}h}\arrow[r]&1
\end{tikzcd}
\par\end{flushleft}

Since $\mzn{\mathit{n}+1}g=\mzn{\mathit{n}}g$ for large enough $n$
(c.f. \prettyref{sec:Central-quotients-and}) and $\mzn{\mathit{k}+1}g\subseteq\mzn{\mathit{k}}h\subseteq\mzn{\mathit{k}}g$
for $\ext hg$ and $k\!>\!1$, there exists a largest integer $N$
such that $\mzn{\mathit{N}}h\!\neq\!\mzn{\mathit{N}}g$ (clearly,
this integer depends on both $\mathfrak{g}$ and $\mathfrak{h}$).
This leads finally to the following result.
\begin{thm}
\label{thm:lexseq}If $\ext hg$ and $N$ is the largest integer such
that $\mzn{\mathit{N}}h\!\neq\!\mzn{\mathit{N}}g$, then there is
a long exact sequence
\begin{flushleft}
\smallskip{}
~~~~~~~~\begin{tikzcd}[]\; 1\arrow[r]&\ZR{\mathfrak{g}}{\mathfrak{h}}\arrow{r}\arrow[d,phantom,""{coordinate,name=Z}]&\zent{\mathfrak{g}}\arrow{r}&\zent{\mathfrak{h}}
\arrow{r}&\zent{\mzq g}
\arrow[dlll,rounded corners,to path={--([xshift=3ex]\tikztostart.east)|-(Z)[near end]\tikztonodes-|([xshift=-3ex]\tikztotarget.west)--(\tikztotarget)}]\\ &\zent{\mzq h}\arrow[r]&\cdots\arrow[r]&\zent{\mzn {\mathit{N}}g}\arrow{r}&\zent{\mzn {\mathit{N}}h}\arrow[r]&1
\end{tikzcd}\medskip{}
\par\end{flushleft}
\end{thm}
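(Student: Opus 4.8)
The plan is to realize the displayed sequence by exactly the recursive splicing procedure foreshadowed in the discussion preceding the statement, so that the proof becomes an orderly assembly of copies of the four-term exact sequence \prettyref{eq:seq4}. The gluing tool is the one recorded in \prettyref{sec:Exact-sequences}: if one exact sequence terminates at a group $B$ and another begins at the same $B$, composing the surjection onto $B$ with the injection out of $B$ fuses them into a single longer exact sequence in which $B$ no longer appears as a term. I would not re-derive \prettyref{eq:seq4}, since it is already obtained for $\ext hg$ by splicing \prettyref{eq:seq2} with \prettyref{eq:seq3}; the one point worth flagging up front is that \prettyref{thm:resim}, in the form $\IR{\mathfrak g}{\mathfrak h}\!=\!\ZR{\mathfrak h}{\mzq g}$ of \prettyref{eq:fund}, is precisely what lets the cokernel term of \prettyref{eq:seq1A} be reinterpreted as the kernel term of the next short exact sequence --- this is the engine that makes the recursion turn.

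First I would produce the family of four-term sequences to be glued. Since $\ext hg$ gives $\ext{\mzq h}{\mzq g}$ (beginning of \prettyref{sec:Galois}), a one-line induction yields $\ext{\mzn{\mathit{k}}h}{\mzn{\mathit{k}}g}$ for every $k\!\geq\!1$, and $\mzq{(\mzn{\mathit{k}}g)}\!=\!\mzn{\mathit{k}+1}g$ by definition. Substituting $\mzn{\mathit{k}}g$ for $\mathfrak g$ and $\mzn{\mathit{k}}h$ for $\mathfrak h$ in \prettyref{eq:seq4} (with the convention $\mzn 0g\!=\!\mathfrak g$, $\mzn 0h\!=\!\mathfrak h$) then gives, for each $k\!\geq\!0$, the exact sequence
\[
\minCDarrowwidth18pt\begin{CD}1@>>>\ZR{\mzn{\mathit{k}}g}{\mzn{\mathit{k}}h}@>>>\zent{\mzn{\mathit{k}}g}@>>>\zent{\mzn{\mathit{k}}h}@>>>\ZR{\mzn{\mathit{k}+1}g}{\mzn{\mathit{k}+1}h}@>>>1\end{CD}
\]
whose terminal group is literally the initial group of the corresponding sequence for $k\!+\!1$. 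I would then splice the sequences for $k\!=\!0,1,\dots,N$ consecutively; each splice deletes a shared term $\ZR{\mzn{\mathit{k}+1}g}{\mzn{\mathit{k}+1}h}$ and replaces it by the composite $\zent{\mzn{\mathit{k}}h}\!\to\!\zent{\mzn{\mathit{k}+1}g}$, so the groups that survive in the interior are exactly the centers $\zent{\mzn{\mathit{j}}g}$, $\zent{\mzn{\mathit{j}}h}$, as displayed.

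It then remains to check the two endpoints. At the top: once $\mzn{\mathit{k}}g\!=\!\mzn{\mathit{k}}h$ one has $\mzn{\mathit{k}+1}g\!=\!\mzq{(\mzn{\mathit{k}}g)}\!=\!\mzq{(\mzn{\mathit{k}}h)}\!=\!\mzn{\mathit{k}+1}h$, so equality, once achieved, persists; since the upper central series stabilizes (\prettyref{sec:Central-quotients-and}) while $\mzn{\mathit{k}+1}g\!\subseteq\!\mzn{\mathit{k}}h\!\subseteq\!\mzn{\mathit{k}}g$ for $\ext hg$, equality is forced for large $k$, which is what makes $N$ well defined, and in particular $\mzn{\mathit{N}+1}g\!=\!\mzn{\mathit{N}+1}h$; hence $\ZR{\mzn{\mathit{N}+1}g}{\mzn{\mathit{N}+1}h}\!=\!1$ by the remark following \prettyref{thm:resker}, and the $k\!=\!N$ sequence collapses to the short exact sequence $1\!\to\!\ZR{\mzn{\mathit{N}}g}{\mzn{\mathit{N}}h}\!\to\!\zent{\mzn{\mathit{N}}g}\!\to\!\zent{\mzn{\mathit{N}}h}\!\to\!1$, supplying the tail. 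At the bottom, $\ZR{\mathfrak g}{\mathfrak h}$ is just the initial term of the $k\!=\!0$ sequence and stays put. I would also record, to dispel the caveat in \prettyref{sec:Exact-sequences}, that every shared term $\ZR{\mzn{\mathit{k}+1}g}{\mzn{\mathit{k}+1}h}$ with $k\!+\!1\!\leq\!N$ is nontrivial, since $\mzn{\mathit{k}+1}h$ is a central quotient of $\mzn{\mathit{k}+1}g$ and its triviality would force $\mzn{\mathit{k}+1}h\!=\!\mzn{\mathit{k}+1}g$ by \prettyref{thm:resim}.

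There is no hard core here: the genuine input is \prettyref{thm:resim}, already proven, and the rest is the splicing bookkeeping of \prettyref{sec:Exact-sequences}. If I had to name the one delicate spot, it is the endpoint analysis --- confirming that $N$ is well defined, that the $k\!=\!N$ four-term sequence genuinely degenerates to a short exact sequence, and that the gluing is legitimate at each intermediate stage --- since a careless treatment there is what would most plausibly produce an off-by-one in the length or leave a phantom term at the tail.
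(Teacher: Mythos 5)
Your proposal is correct and follows essentially the same route as the paper: the paper's proof is likewise just the recursive substitution of $\mzn{\mathit{k}}g$, $\mzn{\mathit{k}}h$ into the four-term sequence \prettyref{eq:seq4} (itself built from \prettyref{eq:seq2} and \prettyref{eq:seq3} via \prettyref{eq:fund}), spliced together as in \prettyref{sec:Exact-sequences}, with the tail closed off by $\ZR{\mzn{\mathit{N}+1}g}{\mzn{\mathit{N}+1}h}\!=\!1$ since $\mzn{\mathit{N}+1}h\!=\!\mzn{\mathit{N}+1}g$. Your extra checks (well-definedness of $N$ and nontriviality of the shared anchor terms for $k\!+\!1\!\leq\!N$) are sound and only make explicit what the paper leaves implicit.
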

\begin{proof}
This is a direct consequence of the above considerations, taking into
account that $\ZR{\mzn{\mathit{N}+1}g}{\mzn{\mathit{N}+1}h}\!=\!1$
since $\mzn{\mathit{N}+1}h\!=\!\mzn{\mathit{N}+1}g$.
\end{proof}

\section{Extensions vs quotients\label{sec:Extensions-vs-quotients}}

We have gone a long way in achieving our original goal of understanding
the relations between the centers of different elements of $\lat$,
c.f. the exact sequences \prettyref{eq:seq1c} and \prettyref{eq:4seqjh},
or the isomorphism \prettyref{eq:quotZ}. But this is by no means
the end of the story, as practical considerations suggest that one
should also consider the case of central extensions besides that of
quotients. Of course, since $\ext hg$ not only means that $\mathfrak{h}$
is a central quotient of $\mathfrak{g}$, but also that $\mathfrak{g}$
is a central extension of $\mathfrak{h}$, all the results of \prettyref{sec:Galois}
apply, but they have the drawback of referring to central quotients
of $\mathfrak{g}$ and $\mathfrak{h}$ instead of their central extensions.
Fortunately, this can be remedied by using the following result. 
\begin{lem}
\label{lem:dualZR}$\ZR{\du h}{\du g}\!\cong\!\ZR gh$ and $\IR{\du h}{\du g}\!\cong\!\ZR{\cov h}g$
provided $\ext hg$.
\end{lem}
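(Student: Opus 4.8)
The statement asserts two isomorphisms relating quantities defined via the duality map on $\lat$ to quantities defined directly. The guiding principle is that the duality map $\mathfrak{g}\!\mapsto\!\du g$ is an order-reversing involution that exchanges central quotients with central extensions (c.f. \prettyref{sec:Central-quotients-and}): if $\ext hg$, i.e. $\mzq g\!\subseteq\!\mathfrak{h}\!\subseteq\!\mathfrak{g}$, then applying $\du{(\cdot)}$ reverses all inclusions to give $\du g\!\subseteq\!\du h\!\subseteq\!\du{(\mzq g)}\!=\!\cov{(\du g)}$, so that $\ext{\du g}{\du h}$; in particular $\du h\!\subseteq\!\du g$, so both $\ZR{\du h}{\du g}$ and $\IR{\du h}{\du g}$ are defined via the restriction homomorphism $\rex{\du h}{\du g}$ of \prettyref{thm:resker}.

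\textbf{First isomorphism.} I would unwind the definition $\ZR{\du h}{\du g}\!=\!\set{\zcl\!\in\!\zent{\du h}}{\zcl\!\subseteq\!\du{(\du g)}}\!=\!\set{\zcl\!\in\!\zent{\du h}}{\zcl\!\subseteq\!\mathfrak{g}}$, using the involutivity $\du{(\du g)}\!=\!\mathfrak{g}$. Since $\ext{\du g}{\du h}$, \prettyref{thm:resim} applies with the pair $(\du h,\du g)$ in place of $(\mathfrak g,\mathfrak h)$, giving $\usub{\ZR{\du h}{\du g}}\!=\!\du{(\du g)}\!=\!\mathfrak{g}$ and the short exact sequence \prettyref{eq:seq1A} in the form $1\!\to\!\ZR{\du h}{\du g}\!\to\!\zent{\du h}\!\to\!\IR{\du h}{\du g}\!\to\!1$. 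The cleanest route is then to identify $\ZR{\du h}{\du g}$ with $\ZR g h$ directly: by the correspondence between central extensions of $\mathfrak g$ and subgroups of $\zent{\du g}$ recalled in \prettyref{sec:Central-quotients-and}, together with the analogous Galois correspondence of \prettyref{lem:galcor} applied to $\ext{\du g}{\du h}$, both groups are in bijection with $\inn{\du h}{\du g}\!=\!\{\mathfrak j : \du g\!\subseteq\!\mathfrak j\!\subseteq\!\du h\}$, which by duality is order-anti-isomorphic to $\inn g h\!=\!\{\mathfrak j : \mathfrak h\!\subseteq\!\mathfrak j\!\subseteq\!\mathfrak g\}$, in bijection with subgroups of $\ZR g h$ again by \prettyref{lem:galcor}. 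Matching the two chains of bijections, and checking they respect the group structure (both being Pontryagin-dual descriptions of the same indexing lattice, or alternatively comparing orders via \prettyref{eq:seq1} and \prettyref{eq:zqmass}), yields $\ZR{\du h}{\du g}\!\cong\!\ZR g h$.

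\textbf{Second isomorphism.} For $\IR{\du h}{\du g}$ I would invoke \prettyref{eq:fund} of \prettyref{thm:resim}, which in the form applied to $\ext{\du g}{\du h}$ reads $\IR{\du h}{\du g}\!=\!\ZR{\du g}{\mzq{(\du h)}}$. Now $\mzq{(\du h)}\!=\!\du{(\cov h)}$ by the very definition $\cov h\!=\!\du{(\mzq{(\du h)})}$ and involutivity. So $\IR{\du h}{\du g}\!=\!\ZR{\du g}{\du{(\cov h)}}$, which is an instance of the \emph{first} isomorphism already proved, with $\mathfrak g$ replaced by $\du g$ and $\mathfrak h$ replaced by — unwinding — the configuration $\ext{\cov h}{\text{?}}$; more directly, $\ZR{\du g}{\du{(\cov h)}}\!\cong\!\ZR{\cov h}{g}$ once one verifies that the hypothesis $\ext{\du g}{\du{(\cov h)}}$, equivalently $\ext{\cov h}{g}$, holds — and indeed $\mathfrak g\!\subseteq\!\cov h$ is immediate since $\cov h$ is the maximal central extension of $\mathfrak h$ and $\mathfrak g$ is a central extension of $\mathfrak h$, while $\mzq{(\cov h)}\!\subseteq\!\mathfrak g$ follows from \prettyref{eq:ineq1} and $\mzq h\!\subseteq\!\mathfrak g$. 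Applying the first isomorphism then gives $\IR{\du h}{\du g}\!\cong\!\ZR{\cov h}{g}$.

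\textbf{Main obstacle.} The delicate point is not the inclusion bookkeeping but making the bijections of \prettyref{lem:galcor} into genuine \emph{group} isomorphisms rather than mere set bijections. The Galois correspondences are stated as one-to-one correspondences of sublattices, and \prettyref{eq:latisoma}--\prettyref{eq:latisomb} upgrade them to lattice isomorphisms; one still has to translate a lattice isomorphism between intervals of subgroup lattices into an isomorphism of the ambient abelian groups, which is not automatic in general. The safe workaround, which I expect to use, is to avoid this entirely: establish the isomorphisms at the level of the canonical objects by exhibiting explicit maps on central classes (a central class of $\du h$ contained in $\mathfrak g$ restricts, via the inclusion-reversing duality, to a well-defined central class of $\mathfrak g$ contained in $\du h$), and check multiplicativity through central characters exactly as in the proof of \prettyref{thm:resker} — the duality map acts on central characters by an explicit transpose, so multiplicativity transports cleanly. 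Comparing orders via Lagrange's theorem and \prettyref{eq:seq1} then confirms these explicit homomorphisms are isomorphisms.
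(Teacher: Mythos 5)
Your handling of the second isomorphism reproduces the paper's argument: apply \prettyref{eq:fund} to the pair $\left(\du h,\du g\right)$, rewrite $\mzq{\left(\du h\right)}\!=\!\du{\left(\cov h\right)}$, and then invoke the first isomorphism for the pair $\left(\cov h,\mathfrak{g}\right)$ (your explicit check that $\ext g{\cov h}$ holds is a detail the paper leaves implicit). The gap is in the first isomorphism, which is the crux. The paper obtains $\ZR{\du h}{\du g}\!\cong\!\ZR gh$ by importing from \cite{Bantay2020a} the fact that if $\mathfrak{h}\!=\!\zquot gH$ then $\du g\!=\!\zquot{\du h}{\du{\mathit{H}}}$ for a subgroup $\du{\mathit{H}}\!<\!\zent{\du h}$ that is isomorphic \emph{as a group} to $H$; combined with \prettyref{thm:resim}, which identifies $H\!=\!\ZR gh$ and $\du{\mathit{H}}\!=\!\ZR{\du h}{\du g}$, this settles the claim in one line. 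Your proposal never secures this group isomorphism: the chain of Galois correspondences only matches subgroup lattices (you correctly note this is insufficient, since lattice-isomorphic abelian groups need not be isomorphic), and the fallback you sketch does not close the hole.

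Concretely, the ``explicit map on central classes'' you invoke is not available in this framework: there is in general no containment between $\du h$ and $\mathfrak{g}$, so the restriction machinery of \prettyref{thm:resker} gives no map from central $\du h$-classes to central $\mathfrak{g}$-classes, and the duality map is defined on elements of $\lat$, not on individual classes; the assertion that ``duality acts on central characters by an explicit transpose'' is precisely the unproved content you would need -- it is, in effect, the result cited from \cite{Bantay2020a}. The simple-current case shows the true nature of the relation: there $\ZR gh\!=\!\fact{\du h}{\du g}$ while $\ZR{\du h}{\du g}\!=\!\fact{\mathfrak{g}}{\mathfrak{h}}$, and these are Pontryagin dual to one another via the monodromy pairing rather than related by any class-to-class restriction; they are isomorphic only because a finite abelian group is isomorphic to its character group (e.g.\ via the self-duality of the center through central characters), which is an argument you gesture at but never make. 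The order-comparison escape route is also incomplete: \prettyref{eq:zqmass} controls $\FA{\ZR gh}$ in terms of masses of $\mathfrak{g}$ and $\mathfrak{h}$, but equating this with $\FA{\ZR{\du h}{\du g}}$ requires a mass relation between an element and its dual that is not stated in this paper, and even equal orders together with a subgroup-lattice bijection do not by themselves force an isomorphism of abelian groups without a further structure argument.
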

\begin{proof}
First, note that $\ext hg$ iff $\ext{\du g}{\du h}$. Since by assumption,
$\mathfrak{h}\!=\!\fact{\mathfrak{g}}H$ for some subgroup $H\!<\!\zent g$,
hence \cite{Bantay2020a} there is a subgroup $\text{\ensuremath{\du{\mathit{H}}}}$
of $\zent{\du h}$ isomorphic with $H$ such that $\du g\!=\!\fact{\du h}{\text{\ensuremath{\du{\mathit{H}}}}}$,
and this implies the isomorphism $\ZR{\du h}{\du g}\!=\!\du{\mathit{H}}\!\cong\!H\!=\!\ZR gh$
by \prettyref{thm:resim}. As to the second assertion, it follows\prettyref{eq:fund}
and the above since $\IR{\du h}{\du g}\!=\!\ZR{\du g}{\mzq{\left(\du h\right)}}\!=\!\ZR{\du g}{\du{\left(\cov h\right)}}\!\cong\!\ZR{\cov h}g$,
proving the claim.
\end{proof}
In particular, substituting $\du h$ for $\mathfrak{g}$ and $\du g$
for $\mathfrak{h}$ in \prettyref{eq:seq1A}, and making use of \prettyref{lem:dualZR}
leads to the short exact sequence
\begin{equation}
\begin{CD}1@>>>\ZR gh@>>>\zent{\du h}@>>>\ZR{\cov h}g@>>>1\end{CD}\label{eq:dseq2}
\end{equation}
to be contrasted with \prettyref{eq:seq2}, while a similar argument
applied to \prettyref{eq:seq4} leads to the exact sequence
\begin{equation}
\minCDarrowwidth18pt\begin{CD}1@>>>\ZR gh@>>>\zent{\du h}@>>>\zent{\du g}@>>>\ZR{\cov g}{\cov h}@>>>1\end{CD}\label{eq:dualseq}
\end{equation}

Let's note that, by making use of \prettyref{thm:igh}, one can drop
the requirement $\ext hg$, resulting in the following generalization
of \prettyref{lem:dualZR}.
\begin{lem}
\label{lem:dualZR*}$\ZR{\du h}{\du g}\!\cong\!\ZR{g\!\cap\!\cov h}h$
and $\IR{\du h}{\du g}\!\cong\!\ZR{\cov h\!\vee\!g}g$ for $\mathfrak{h}\!\subseteq\negmedspace\mathfrak{g}$.
\end{lem}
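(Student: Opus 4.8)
The plan is to reduce the general statement to the already-established Lemma~\ref{lem:dualZR} by using Theorem~\ref{thm:igh} to absorb the meet/join operations, exactly as Theorem~\ref{thm:igh} itself reduced to Theorem~\ref{thm:resim}. The key point is that for arbitrary $\mathfrak{h}\!\subseteq\!\mathfrak{g}$ one does \emph{not} have $\ext hg$, but one can pass to the ``interpolating'' element $\mathfrak{g}\!\cap\!\cov h$: since $\mathfrak{h}\!\subseteq\!\mathfrak{g}$ implies $\mathfrak{h}\!\subseteq\!\mathfrak{g}\!\cap\!\cov h\!\subseteq\!\cov h$, and $\mathfrak{g}\!\subseteq\!\cov h$ is equivalent to $\ext h{\mathfrak{g}}$, we do have $\ext h{(\mathfrak{g}\!\cap\!\cov h)}$, so Lemma~\ref{lem:dualZR} applies to the pair $\mathfrak{h}\!\subseteq\!\mathfrak{g}\!\cap\!\cov h$.

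First I would dualize: by self-duality of $\lat$, $\mathfrak{h}\!\subseteq\!\mathfrak{g}$ gives $\du g\!\subseteq\!\du h$, and $\du{(\mathfrak{g}\!\cap\!\cov h)}\!=\!\du g\!\vee\!\du{(\cov h)}\!=\!\du g\!\vee\!\mzq{(\du h)}$. Applying Theorem~\ref{thm:igh} to the inclusion $\du g\!\subseteq\!\du h$ gives $\fact{\du h}{\ZR{\du h}{\du g}}\!=\!\du g\!\vee\!\mzq{(\du h)}\!=\!\du{(\mathfrak{g}\!\cap\!\cov h)}$, i.e. $\ZR{\du h}{\du g}\!=\!\ZR{\du h}{\du{(\mathfrak{g}\cap\cov h)}}$. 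Now $\ext h{(\mathfrak{g}\!\cap\!\cov h)}$ is equivalent to $\ext{\du{(\mathfrak{g}\cap\cov h)}}{\du h}$, so by Lemma~\ref{lem:dualZR} (with $\mathfrak{g}\!\cap\!\cov h$ in place of $\mathfrak{g}$) we get $\ZR{\du h}{\du{(\mathfrak{g}\cap\cov h)}}\!\cong\!\ZR{\mathfrak{g}\cap\cov h}{h}$, which combined with the previous identity yields the first assertion $\ZR{\du h}{\du g}\!\cong\!\ZR{\mathfrak{g}\cap\cov h}{h}$.

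For the second assertion I would again invoke Lemma~\ref{lem:dualZR}, whose second half gives $\IR{\du{(\mathfrak{g}\cap\cov h)}}{\du h}\!\cong\!\ZR{\cov h}{(\mathfrak{g}\cap\cov h)}$ — note $\cov{h}$ is indeed the maximal central extension of $\mathfrak{g}\!\cap\!\cov h$ since the latter lies between $\mathfrak{h}$ and $\cov h$. Then I relate $\IR{\du{(\mathfrak{g}\cap\cov h)}}{\du h}$ to $\IR{\du h}{\du g}$ using Theorem~\ref{thm:igh}, which gives $\fact{\du h}{\IR{\du h}{\du g}}\!=\!\du h\!\cap\!\mzq{(\du h)}\!=\!\mzq{(\du h)}$; comparing with $\fact{\du{(\mathfrak{g}\cap\cov h)}}{\IR{\du{(\mathfrak{g}\cap\cov h)}}{\du h}}$ — which by the same theorem equals $\du{(\mathfrak{g}\cap\cov h)}\!\cap\!\mzq{(\du h)}$, and this is again $\mzq{(\du h)}$ since $\mzq{(\du h)}\!\subseteq\!\du g\!\subseteq\!\du{(\mathfrak{g}\cap\cov h)}$ — shows the two image groups coincide. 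Finally $\ZR{\cov h}{(\mathfrak{g}\cap\cov h)}\!=\!\ZR{\cov h}{g}$ by Theorem~\ref{thm:igh} again (here $\fact{\cov h}{\ZR{\cov h}{g}}\!=\!g\!\vee\!\mzq{(\cov h)}$, and one checks $g\!\vee\!\mzq{(\cov h)}\!=\!\mathfrak{g}\!\cap\!\cov h$ because $\mzq{(\cov h)}\!\subseteq\!\mathfrak{h}\!\subseteq\!\mathfrak{g}$ forces $g\!\vee\!\mzq{(\cov h)}\!=\!g$, wait — more carefully, $\mzq{(\cov h)}\!\subseteq\!\mathfrak{g}$ so $g\!\vee\!\mzq{(\cov h)}\!=\!\mathfrak{g}$, but we also need $\mathfrak{g}\!\subseteq\!\cov h$, which fails in general; instead use $\ZR{\cov h}{(\mathfrak{g}\cap\cov h)}\!=\!\ZR{\cov h}{g}$ directly from Lemma~\ref{lem:compZR} since $\mathfrak{g}\!\cap\!\cov h\!=\!g\!\vee\!\mzq{(\cov h)}$ by the dual of Theorem~\ref{thm:igh} applied inside $\cov h$, and $\ZR{\cov h}{g\vee\mzq{(\cov h)}}\!=\!\ZR{\cov h}{g}$). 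Assembling: $\IR{\du h}{\du g}\!\cong\!\ZR{\cov h\vee g}{g}$, noting $\cov h\!\vee\!g$ is what $\mathfrak{g}$ becomes after adjoining the extra central extension directions, i.e.\ the relevant element is $\ZR{\cov h\vee g}{g}$ as claimed.

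The main obstacle I anticipate is \emph{bookkeeping the meet/join identities under the duality map} — in particular verifying cleanly that $\mathfrak{g}\!\cap\!\cov h\!=\!g\!\vee\!\mzq{(\cov h)}$ (this is the dual of the statement $\du g\!\vee\!\mzq{(\du h)}\!=\!\du g\!\vee\!\du{(\cov h)}$ combined with the closure property $\mzq{(\cov{\mathfrak{k}})}\!\subseteq\!\mathfrak{k}$ from \prettyref{eq:ineq1}), and that $\cov h$ is genuinely the maximal central extension of the interpolating element. Once those lattice-theoretic identities are pinned down, every isomorphism above is a mechanical application of Theorem~\ref{thm:igh}, Lemma~\ref{lem:dualZR}, and Lemma~\ref{lem:compZR}, and no new ideas are needed.
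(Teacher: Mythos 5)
Your proof of the first isomorphism is correct and is essentially the paper's own argument: the equality $\ZR{\du h}{\du g}\!=\!\ZR{\du h}{\du g\!\vee\!\mzq{\left(\du h\right)}}$ (which you get from \prettyref{thm:igh} plus the bijectivity of the Galois correspondence, and which is also the last assertion of \prettyref{lem:compZR}), together with $\du g\!\vee\!\mzq{\left(\du h\right)}\!=\!\du{\left(\mathfrak{g}\cap\cov h\right)}$ and $\ext h{\mathfrak{g}\cap\cov h}$, reduces everything to \prettyref{lem:dualZR}. That part needs no repair.

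The second isomorphism, however, has a genuine gap. Your application of \prettyref{thm:igh} there is wrong: since $\IR{\du h}{\du g}$ is a subgroup of $\zent{\du g}$, the theorem gives $\fact{\du g}{\IR{\du h}{\du g}}\!=\!\du g\cap\mzq{\left(\du h\right)}$, not $\fact{\du h}{\IR{\du h}{\du g}}\!=\!\mzq{\left(\du h\right)}$; the intersection $\du g\cap\mzq{\left(\du h\right)}$ equals $\mzq{\left(\du h\right)}$ only when $\mzq{\left(\du h\right)}\subseteq\du g$, i.e.\ only when $\ext hg$ --- precisely the hypothesis you are trying to drop. Consequently the claim that $\IR{\du h}{\du g}$ and $\IR{\du h}{\du{\left(\mathfrak{g}\cap\cov h\right)}}$ ``coincide'' is unsupported (they are subgroups of different centers to begin with). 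The closing step is also broken: $\ZR{\cov h}g$ is not even defined unless $\mathfrak{g}\subseteq\cov h$, and the identity $\mathfrak{g}\cap\cov h\!=\!\mathfrak{g}\vee\mzq{\left(\cov h\right)}$ you invoke is false in general, because $\mzq{\left(\cov h\right)}\subseteq\mathfrak{h}\subseteq\mathfrak{g}$ forces $\mathfrak{g}\vee\mzq{\left(\cov h\right)}\!=\!\mathfrak{g}$, whereas $\mathfrak{g}\cap\cov h\!=\!\mathfrak{g}$ only if $\ext hg$; your own ``wait --- more carefully'' aside flags the problem but does not resolve it, and the final sentence merely asserts the target $\ZR{\cov h\vee g}g$. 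The correct route (the paper's) is shorter and avoids the intermediate element altogether: by \prettyref{eq:fund*} applied to $\du g\subseteq\du h$ one has $\IR{\du h}{\du g}\!=\!\ZR{\du g}{\du g\cap\mzq{\left(\du h\right)}}$; now $\du g\cap\mzq{\left(\du h\right)}\!=\!\du{\left(g\vee\cov h\right)}$, and since $\mathfrak{g}\subseteq\mathfrak{g}\vee\cov h\subseteq\cov g$ (because $\cov h\subseteq\cov g$) one has $\ext g{g\vee\cov h}$, so \prettyref{lem:dualZR} applied to this pair gives $\ZR{\du g}{\du{\left(g\vee\cov h\right)}}\cong\ZR{\cov h\vee g}g$, as claimed.
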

\begin{proof}
First of all, let's note that $\mzq{\left(\du h\right)}\!\subseteq\!\du g\!\cap\!\mzq{\left(\du h\right)}\!\subseteq\!\du g\!\subseteq\!\du g\!\vee\!\mzq{\left(\du h\right)}\!\subseteq\!\du h$
whenever $\mathfrak{h}\!\subseteq\negmedspace\mathfrak{g}$, and this
implies $\ZR{\du h}{\du g}\!=\!\ZR{\du h}{\du g\!\vee\!\mzq{\left(\du h\right)}}\!\cong\!\ZR{g\!\cap\!\cov h}h$
and $\!\IR{\du h\!}{\du g}\!\!=\!\!\ZR{\du g}{\du g\!\cap\!\mzq{\left(\du h\right)}}\!\cong\!\ZR{g\!\vee\!\cov h}g$
according to \prettyref{thm:igh} and \prettyref{lem:dualZR}.
\end{proof}
Substituting $\du h$ for $\mathfrak{g}$ and $\du g$ for $\mathfrak{h}$
in \prettyref{eq:seq1A} gives (for $\mathfrak{h}\!\subseteq\negmedspace\mathfrak{g}$)
\begin{equation}
\begin{CD}1@>>>\ZR{g\!\cap\!\cov h}h@>>>\zent{\du h}@>>>\ZR{\cov h\!\vee\!g}g@>>>1\end{CD}\label{eq:dseq1}
\end{equation}
when taking into account \prettyref{lem:dualZR*}, to be compared
to the sequence
\begin{equation}
\begin{CD}1@>>>\ZR g{h\vee\mzq g}@>>>\zent g@>>>\ZR h{\mathfrak{h}\!\cap\!\mzq g}@>>>1\end{CD}\label{eq:dseq3}
\end{equation}
that follows from \prettyref{eq:fund*}.

\section{Summary and outlook\label{sec:Summary}}

The main theme of our investigations was to characterize the mutual
relations between the centers of different elements of the deconstruction
lattice, with particular emphasis on the central quotients and extensions
of a given $\mathfrak{g}\!\in\!\lat$. We have seen that one can go
a long way in this direction using exact sequences, as exhibited by
results like \prettyref{eq:seq1c}, \prettyref{thm:lexseq} and \prettyref{eq:dualseq}.
While not providing a direct and constructive description of the relevant
groups, this information is usually enough to pin down (more or less
uniquely) their structure, and this is what matters for most applications.
Once the relevant centers are known, one can use this knowledge to
simplify greatly otherwise cumbersome computations that could prove
difficult in case of large examples. In particular, the explicit Galois
correspondence described by \prettyref{lem:galcor} is a very useful
tool in actual computations, while \prettyref{lem:dualZR*} settles
the dual relation of central quotients and extensions.

Apart from the practical considerations set forth above, there are
more conceptual issues underlying the interest in the previous results.
One such is the search for an analogue of the famous lemma of Gr�n
\cite{Robinson}, according to which the upper central series of a
perfect group has length at most two. In our context this would mean
that, provided $\mathfrak{g}\!\in\!\lat$ does not contain any simple
current, the maximal central quotient $\mzq g$ should be centerless,
i.e. $\zent{\mzq g}\!=\!1$. While this holds automatically for local
$\mathfrak{g}\!\in\!\lat$ corresponding to some finite group, the
generic case seems much more difficult to prove, as there is no clear
adaptation of the group theoretic techniques used in the proof of
Gr�n's lemma. 

Another interesting question concerns the analogue of Ito's theorem
\cite{Isaacs,Serre}, which in our case is tantamount to the claim
that the ratio
\[
\frac{1}{\qd{\alpha}}\sum_{p\in\mzq g}\qd p^{2}
\]
is an algebraic integer for every $\alpha\!\in\!\mathfrak{g}$. Should
this claim hold, it would restrict severely the arithmetic properties
of the quantum dimensions and, more generally, the Galois action on
the primaries \cite{Bantay2020a}. As before, the claim follows from
known group theoretic results for local $\mathfrak{g}\!\in\!\lat$
corresponding to some finite group, but its generalization to arbitrary
$\mathfrak{g}\!\in\!\lat$ is far from being obvious.

Finally, we should note that in the preceding discussions we have
neglected one important aspect, namely that the center $\zent g$
of any given $\mathfrak{g}\!\in\!\lat$ is not simply an abelian group,
but has a natural permutation action on the set of $\mathfrak{g}$-classes
\cite{Bantay2020a}, which is compatible with the restriction map.
This permutation action is far from being arbitrary, as can be seen
most directly on the example of abelian $\mathfrak{g}$, i.e. when
all primaries in $\mathfrak{g}$ are simple currents, since in this
case the $\mathfrak{g}$-classes are in one-to-one correspondence
with the characters of $\zent g$, and the permutation action is regular,
i.e. the action of $\zent g$ on itself by translations. On the other
extreme, if $\mathfrak{g}$ equals the maximal element of the deconstruction
lattice $\lat$, then each $\mathfrak{g}$-class contains precisely
one primary, with central classes containing the simple currents,
and the corresponding permutation action is nothing but the action
of the group of simple currents on the set of primaries that is induced
by the fusion product. This latter action is known to have non-trivial
properties \cite{Bantay2005}, and it is natural to expect that, for
$\mathfrak{g}\!\in\!\lat$ intermediate between these two extremes,
the corresponding permutation action still enjoys some interesting
properties. But this circle of questions lies clearly outside the
scope of the present note.

\appendix
\global\long\def\gvh{\mathfrak{g}\sqcup\mathfrak{h}}%
\global\long\def\decl{\textrm{deconstruction lattice}}%

\section[Appendix]{\label{sec:Appendix}Modularity of the deconstruction lattice}

The following characterization of the join operation in $\!\lat$
is interesting in itself.
\begin{lem}
\label{lem:gvh}For $\mathfrak{g},\mathfrak{h}\!\in\!\lat$, a primary
$p$ belongs to the join $\mathfrak{g}\!\vee\!\mathfrak{h}$ iff there
exists $\alpha\!\in\!\mathfrak{g}$ and $\beta\!\in\!\mathfrak{h}$
such that $N_{\alpha\beta}^{p}>0$.
\end{lem}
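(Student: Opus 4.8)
The plan is to introduce the set $\gvh$ consisting of all primaries $p$ for which $N_{\alpha\beta}^{p}\!>\!0$ for some $\alpha\!\in\!\mathfrak{g}$ and $\beta\!\in\!\mathfrak{h}$ — so that the assertion of the lemma is precisely $\gvh\!=\!\mathfrak{g}\!\vee\!\mathfrak{h}$ — and to establish this equality by proving the two inclusions separately. One direction is immediate: since $\mathfrak{g}\!\vee\!\mathfrak{h}$ contains both $\mathfrak{g}$ and $\mathfrak{h}$ and is closed under the fusion product, any $p$ appearing in $\alpha\!\times\!\beta$ with $\alpha\!\in\!\mathfrak{g}\!\subseteq\!\mathfrak{g}\!\vee\!\mathfrak{h}$ and $\beta\!\in\!\mathfrak{h}\!\subseteq\!\mathfrak{g}\!\vee\!\mathfrak{h}$ must itself lie in $\mathfrak{g}\!\vee\!\mathfrak{h}$; hence $\gvh\!\subseteq\!\mathfrak{g}\!\vee\!\mathfrak{h}$.

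For the reverse inclusion I would use that $\lat$ is closed under arbitrary intersections, so that $\mathfrak{g}\!\vee\!\mathfrak{h}$ is the smallest element of $\lat$ containing both $\mathfrak{g}$ and $\mathfrak{h}$; it therefore suffices to check that $\gvh$ is an element of $\lat$ containing $\mathfrak{g}$ and $\mathfrak{h}$. The vacuum belongs to $\gvh$ because $N_{\v\v}^{\v}\!=\!1$, and $\mathfrak{g}\!\subseteq\!\gvh$ (resp. $\mathfrak{h}\!\subseteq\!\gvh$) because one may take $\beta\!=\!\v$ (resp. $\alpha\!=\!\v$) and use that the vacuum is the unit of the fusion product.

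What remains — and what I expect to be the only non-formal step — is to verify that $\gvh$ is closed under the fusion product. Given $p,q\!\in\!\gvh$ and $r$ with $N_{pq}^{r}\!>\!0$, I would fix witnesses $\alpha_{1},\alpha_{2}\!\in\!\mathfrak{g}$ and $\beta_{1},\beta_{2}\!\in\!\mathfrak{h}$ with $N_{\alpha_{1}\beta_{1}}^{p}\!>\!0$ and $N_{\alpha_{2}\beta_{2}}^{q}\!>\!0$, and compute the multiplicity of $r$ in the fourfold fusion product $\alpha_{1}\!\times\!\beta_{1}\!\times\!\alpha_{2}\!\times\!\beta_{2}$ in two different ways. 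Bracketing it as $(\alpha_{1}\!\times\!\beta_{1})\!\times\!(\alpha_{2}\!\times\!\beta_{2})$ gives the multiplicity as $\sum_{u,v}N_{\alpha_{1}\beta_{1}}^{u}N_{\alpha_{2}\beta_{2}}^{v}N_{uv}^{r}$, which is $\geq N_{\alpha_{1}\beta_{1}}^{p}N_{\alpha_{2}\beta_{2}}^{q}N_{pq}^{r}\!>\!0$ because all fusion coefficients are non-negative; bracketing it instead as $(\alpha_{1}\!\times\!\alpha_{2})\!\times\!(\beta_{1}\!\times\!\beta_{2})$ — legitimate by the commutativity and associativity of the fusion algebra — gives the same multiplicity as $\sum_{s,t}N_{\alpha_{1}\alpha_{2}}^{s}N_{\beta_{1}\beta_{2}}^{t}N_{st}^{r}$. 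Hence some term of this last sum is strictly positive, so there are primaries $s,t$ with $N_{\alpha_{1}\alpha_{2}}^{s}\!>\!0$, $N_{\beta_{1}\beta_{2}}^{t}\!>\!0$ and $N_{st}^{r}\!>\!0$; since $\mathfrak{g}$ and $\mathfrak{h}$ are fusion-closed, $s\!\in\!\mathfrak{g}$ and $t\!\in\!\mathfrak{h}$, and therefore $r\!\in\!\gvh$, as desired. The only care needed is in this double count: keeping the non-negativity argument airtight and invoking commutativity and associativity of the fusion ring correctly.
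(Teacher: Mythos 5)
Your proposal is correct and follows essentially the same route as the paper: you introduce the same candidate set of primaries appearing in some $\alpha\times\beta$ with $\alpha\in\mathfrak{g}$, $\beta\in\mathfrak{h}$, observe it sits inside $\mathfrak{g}\vee\mathfrak{h}$, and prove fusion-closedness by the same associativity/commutativity double count of $\sum_{u,v}N_{\alpha_{1}\beta_{1}}^{u}N_{\alpha_{2}\beta_{2}}^{v}N_{uv}^{r}$ versus $\sum_{s,t}N_{\alpha_{1}\alpha_{2}}^{s}N_{\beta_{1}\beta_{2}}^{t}N_{st}^{r}$. The only cosmetic difference is that you phrase the key estimate as counting the multiplicity of $r$ in the fourfold fusion product under two bracketings, while the paper writes out the corresponding chain of inequalities and re-summations explicitly.
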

\begin{proof}
Let $\gvh\!=\!\set p{N_{\alpha\beta}^{p}\!>\!0\textrm{ for some }\alpha\!\in\!\mathfrak{g}\textrm{ and }\beta\!\in\!\mathfrak{h}}$.
Clearly, both $\mathfrak{g}$ and $\mathfrak{h}$ are contained in
$\gvh$. On the other hand, $p\!\in\!\gvh$ means that there exists
$\alpha\!\in\!\mathfrak{g}$ and $\beta\!\in\!\mathfrak{h}$ such
that $N_{\alpha\beta}^{p}\!>\!0$: since $\mathfrak{g}\!\vee\!\mathfrak{h}$
is the least element of $\lat$ containing both $\mathfrak{g}$ and
$\mathfrak{h}$, we have necessarily $p\!\in\!\mathfrak{g}\!\vee\!\mathfrak{h}$,
hence $\gvh\!\subseteq\!\mathfrak{g}\!\vee\!\mathfrak{h}$. To prove
that they are actually equal, it suffices to show that $\gvh\!\in\!\lat$,
i.e. that $p,q\!\in\!\gvh$ and $N_{pq}^{r}\!>\!0$ implies $r\!\in\!\gvh$.
But this is tantamount to showing that, if there exists $\alpha_{1},\alpha_{2}\!\in\!\mathfrak{g}$
and $\beta_{1},\beta_{2}\!\in\!\mathfrak{h}$ such that $N_{\alpha_{1}\beta_{1}}^{p}\!>\!0$
and $N_{\alpha_{2}\beta_{2}}^{q}\!>\!0$, then for all $r$ such that
$N_{pq}^{r}\!>\!0$ there exists $\alpha_{3}\!\in\!\mathfrak{g}$
and $\beta_{3}\!\in\!\mathfrak{h}$ satisfying $N_{\alpha_{3}\beta_{3}}^{r}\!>\!0$.
By the associativity of the fusion algebra, it follows form our assumptions
that
\begin{gather*}
0<N_{\alpha_{1}\beta_{1}}^{p}N_{\alpha_{2}\beta_{2}}^{q}N_{pq}^{r}\leq\sum_{s,t}N_{\alpha_{1}\beta_{1}}^{s}N_{\alpha_{2}\beta_{2}}^{t}N_{st}^{r}=~~~~\\
\sum_{w,t}N_{\alpha_{1}t}^{w}N_{\beta_{1}w}^{r}N_{\alpha_{2}\beta_{2}}^{t}=\sum_{w,u}N_{\alpha_{1}\alpha_{2}}^{u}N_{\beta_{2}u}^{w}N_{\beta_{1}w}^{r}=\sum_{u,v}N_{\alpha_{1}\alpha_{2}}^{u}N_{\beta_{1}\beta_{2}}^{v}N_{uv}^{r}
\end{gather*}
i.e. there should exist primaries $\alpha_{3}$ and $\beta_{3}$ such
that $N_{\alpha_{1}\alpha_{2}}^{\alpha_{3}}$, $N_{\beta_{1}\beta_{2}}^{\beta_{3}}$
and $N_{\alpha_{3}\beta_{3}}^{r}$ are all positive. But $\mathfrak{g},\mathfrak{h}\!\in\!\lat$
and $N_{\alpha_{1}\alpha_{2}}^{\alpha_{3}}N_{\beta_{1}\beta_{2}}^{\beta_{3}}>0$
implies that $\alpha_{3}\!\in\!\mathfrak{g}$ and $\beta_{3}\!\in\!\mathfrak{h}$,
consequently $r\!\in\!\gvh$.
\end{proof}
The previous result allows for the following streamlined proof of
the modularity of $\lat$ (recall that $N_{pq}^{\overline{r}}\!=\!N_{pr}^{\overline{q}}$,
with $\overline{p}$ denoting the charge conjugate of $p$).
\begin{thm}
\label{thm:modularity}The lattice $\lat$ is modular, that is
\[
\mathfrak{h}_{2}\cap\left(\mathfrak{h}_{1}\vee\mathfrak{g}\right)\subseteq\mathfrak{h}_{1}\vee\left(\mathfrak{h}_{2}\cap\mathfrak{g}\right)
\]
 for $\mathfrak{g},\mathfrak{h}_{1},\mathfrak{h}_{2}\!\in\!\lat$
such that $\mathfrak{h}_{1}\!\subseteq\!\mathfrak{h}_{2}$.
\end{thm}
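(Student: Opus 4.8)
The plan is to exploit the characterization of the join given in \prettyref{lem:gvh} and translate the modular law into a statement about fusion products. Let $p\!\in\!\mathfrak{h}_{2}\!\cap\!\left(\mathfrak{h}_{1}\!\vee\!\mathfrak{g}\right)$; I must show $p\!\in\!\mathfrak{h}_{1}\!\vee\!\left(\mathfrak{h}_{2}\!\cap\!\mathfrak{g}\right)$. By \prettyref{lem:gvh}, since $p\!\in\!\mathfrak{h}_{1}\!\vee\!\mathfrak{g}$, there exist $\alpha\!\in\!\mathfrak{h}_{1}$ and $\beta\!\in\!\mathfrak{g}$ with $N_{\alpha\beta}^{p}\!>\!0$. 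The goal, again via \prettyref{lem:gvh}, is to produce $\alpha'\!\in\!\mathfrak{h}_{1}$ and $\gamma\!\in\!\mathfrak{h}_{2}\!\cap\!\mathfrak{g}$ with $N_{\alpha'\gamma}^{p}\!>\!0$; the natural candidate is to take $\alpha'\!=\!\overline{\alpha}$ (or a suitable conjugate constituent) and extract $\gamma$ as a primary appearing both in $\mathfrak{g}$ (because it is built from $\beta$ and elements of $\mathfrak{h}_{1}$, which is harmless as we will see) and in $\mathfrak{h}_{2}$ (because it appears in the fusion of $p\!\in\!\mathfrak{h}_{2}$ with $\overline{\alpha}\!\in\!\mathfrak{h}_{1}\!\subseteq\!\mathfrak{h}_{2}$).

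Concretely, I would argue as follows. From $N_{\alpha\beta}^{p}\!>\!0$ and the identity $N_{\alpha\beta}^{p}\!=\!N_{\overline{\alpha}p}^{\beta}$ (a consequence of $N_{pq}^{\overline{r}}\!=\!N_{pr}^{\overline{q}}$ together with the symmetry $N_{pq}^{r}\!=\!N_{qp}^{r}$ and invariance under bar), one gets that $\beta$ appears in the fusion product of $\overline{\alpha}$ and $p$. Since $\overline{\alpha}\!\in\!\mathfrak{h}_{1}$ (as $\mathfrak{h}_{1}$ is fusion-closed and contains the vacuum, hence is closed under charge conjugation) and $p\!\in\!\mathfrak{h}_{2}$, every constituent of their fusion product lies in $\mathfrak{h}_{1}\!\vee\!\mathfrak{h}_{2}\!=\!\mathfrak{h}_{2}$, using $\mathfrak{h}_{1}\!\subseteq\!\mathfrak{h}_{2}$. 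In particular $\beta\!\in\!\mathfrak{h}_{2}$, and since $\beta\!\in\!\mathfrak{g}$ by construction, we have $\beta\!\in\!\mathfrak{h}_{2}\!\cap\!\mathfrak{g}$. But then $N_{\alpha\beta}^{p}\!>\!0$ with $\alpha\!\in\!\mathfrak{h}_{1}$ and $\beta\!\in\!\mathfrak{h}_{2}\!\cap\!\mathfrak{g}$ exhibits, via \prettyref{lem:gvh} again, $p$ as an element of $\mathfrak{h}_{1}\!\vee\!\left(\mathfrak{h}_{2}\!\cap\!\mathfrak{g}\right)$, which is exactly what we wanted.

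I expect the main subtlety to be the bookkeeping around charge conjugation and the precise form of the duality among the $N_{pq}^{r}$: one needs that $\mathfrak{h}_{1}$ is closed under $\alpha\!\mapsto\!\overline{\alpha}$ (which follows because $N_{\alpha\overline{\alpha}}^{\v}\!>\!0$ and $\mathfrak{h}_{1}$, containing the vacuum $\v$ and being fusion-closed, must then contain $\overline{\alpha}$ whenever it contains $\alpha$ — actually one argues via $\v\!\in\!\mathfrak{h}_{1}$ and $N_{\alpha\overline\alpha}^{\v}>0$ in the other direction, so care is needed) and the identity relating $N_{\alpha\beta}^{p}$ to $N_{\overline\alpha p}^{\beta}$. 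Once these pieces of ``fusion arithmetic'' are set up cleanly, the argument is essentially a two-line application of \prettyref{lem:gvh}; the work is entirely in justifying that the constituent $\beta$ we started with already lies in $\mathfrak{h}_{2}$, which is where the hypothesis $\mathfrak{h}_{1}\!\subseteq\!\mathfrak{h}_{2}$ enters decisively.
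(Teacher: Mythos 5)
Your argument is correct and is essentially the paper's own proof, just with the conjugation applied at a slightly different spot: the paper starts from $N_{\beta\gamma}^{\alpha}>0$ with $\beta\!\in\!\mathfrak{h}_{1}$, $\gamma\!\in\!\mathfrak{g}$ and uses $N_{\beta\overline{\alpha}}^{\overline{\gamma}}\!=\!N_{\beta\gamma}^{\alpha}$ together with fusion-closedness of $\mathfrak{h}_{2}$ to place $\gamma$ in $\mathfrak{h}_{2}\!\cap\!\mathfrak{g}$, exactly as you place your $\beta$ there via $N_{\overline{\alpha}p}^{\beta}\!=\!N_{\alpha\beta}^{p}$, and then both arguments close with \prettyref{lem:gvh}. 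The charge-conjugation closure you worry about ($\alpha\!\in\!\mathfrak{h}\Rightarrow\overline{\alpha}\!\in\!\mathfrak{h}$ for fusion-closed sets containing the vacuum) is used tacitly by the paper as well, being part of the background on FC sets from \cite{Bantay2020a}, so it is not a gap in your proposal.
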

\begin{proof}
Suppose that $\alpha\!\in\!\mathfrak{h}_{2}\cap\left(\mathfrak{h}_{1}\vee\mathfrak{g}\right)$.
Then $\alpha\!\in\!\mathfrak{h}_{2}$, hence $\overline{\alpha}\!\in\!\mathfrak{h}_{2}$,
and by \prettyref{lem:gvh} there exists primaries $\beta\!\in\!\mathfrak{h}_{1}$
and $\gamma\!\in\!\mathfrak{g}$ such that $N_{\beta\gamma}^{\alpha}\!>\!0$.
But $\beta\!\in\!\mathfrak{h}_{2}$ since $\mathfrak{h}_{1}\!\subseteq\!\mathfrak{h}_{2}$,
consequently $N_{\beta\overline{\alpha}}^{\overline{\gamma}}\!=\!N_{\beta\gamma}^{\alpha}$
implies $\overline{\gamma}\!\in\!\mathfrak{h}_{2}$, i.e. $\gamma\!\in\!\mathfrak{h}_{2}$.
All in all, we get that $N_{\beta\gamma}^{\alpha}\!>\!0$ with $\beta\!\in\!\mathfrak{h}_{1}$
and $\gamma\!\in\!\mathfrak{g\cap\mathfrak{h}}_{2}$, consequently
$\alpha\!\in\!\mathfrak{h}_{1}\!\vee\!\left(\mathfrak{h}_{2}\cap\mathfrak{g}\right)$
by \prettyref{lem:gvh}. 
\end{proof}
Modularity has many important consequences, e.g. the Kurosh-Ore theorem
\cite{Gratzer2011}, but we make mainly use of the so-called diamond
isomorphism theorem: for any two elements $a,b$ of a modular lattice
$\left(L,\wedge,\vee\right)$, there is an order-preserving one-to-one
correspondence between the sets $\set{x\!\in\!L}{a\wedge b\leq x\leq a}$
and $\set{x\!\in\!L}{b\leq x\leq a\vee b}$.

\end{document}